\documentclass{article}

\PassOptionsToPackage{numbers,sort&compress}{natbib}
\usepackage[preprint]{neurips_2026}
\usepackage{rotating}

\usepackage[utf8]{inputenc} 
\usepackage[T1]{fontenc}    
\usepackage{hyperref}       
\usepackage{url}            
\usepackage{booktabs}       
\usepackage{amsfonts}       
\usepackage{nicefrac}       
\usepackage{microtype}      
\usepackage{xcolor}         

\usepackage{amsmath,amsfonts,bm}

\def\1{\bm{1}}

\DeclareMathAlphabet{\mathsfit}{\encodingdefault}{\sfdefault}{m}{sl}
\SetMathAlphabet{\mathsfit}{bold}{\encodingdefault}{\sfdefault}{bx}{n}

\DeclareMathOperator*{\argmax}{arg\,max}
\DeclareMathOperator*{\argmin}{arg\,min}

\usepackage[utf8]{inputenc} 
\usepackage[T1]{fontenc}    
\usepackage{url}            
\usepackage{booktabs}       
\usepackage{amsfonts}       
\usepackage{nicefrac}       
\usepackage{microtype}      
\usepackage{xcolor}         
\usepackage{color}
\usepackage{amsmath,amssymb,amsthm}
\usepackage{xspace}
\usepackage{mathtools}
\usepackage{dsfont}
\usepackage{tocvsec2}
\usepackage{wrapfig}
\usepackage{float}
\usepackage{array}

\usepackage{xcolor}         
\definecolor{mycyan}{RGB}{0,204,204}

\newcommand{\EE}{\mathbb{E}}

\newcommand{\blambda}{\bm{\lambda}}

\renewcommand{\epsilon}{\varepsilon}

\definecolor{ed}{RGB}{225,0,100}

\usepackage{hyperref}
\usepackage{url}

\usepackage{booktabs}
\usepackage{array}

\usepackage{lipsum}
\usepackage[utf8]{inputenc} 
\usepackage[T1]{fontenc}    
\usepackage{hyperref}       
\usepackage{url}            
\usepackage{booktabs}       
\usepackage{boldline}
\usepackage{cellspace}
\usepackage{amsfonts}       
\usepackage{nicefrac}       
\usepackage{microtype}      
\usepackage{amsmath,amsthm,amsfonts,amssymb,bm,accents,dsfont}

\usepackage{bbm}
\usepackage{graphicx}
\usepackage{style_files/mysymbol}
\usepackage{caption}
\usepackage{subcaption}
\usepackage{multirow}
\usepackage{blindtext}
\usepackage{multibib}
\newcites{supp}{Supplementary References}
\usepackage{tikz}
\usetikzlibrary{arrows,arrows.meta}
\usepackage{pgfplots}
\usepgfplotslibrary{groupplots}
\usepgfplotslibrary{fillbetween}
\definecolor{green}{rgb}{11, 82, 19}
\usepackage{hyperref}

\usepackage{algpseudocode,algorithm}

\newtheorem{proposition}{Proposition}

\newtheoremstyle{assume}
  {3pt}
  {3pt}
  {}
  {}
  {\bf}
  {.}
  { }
  {\thmname{#1}~\thmnumber{#2}~\thmnote{\textnormal{\textbf{(#3)}}}}
\theoremstyle{assume}
\newtheorem{assumption}{Assumption}

\newcommand{\terr}[1]{{\color{gray}\scriptsize $\pm$ #1}}

\newcommand{\norm}[1]{\ensuremath{\left\| #1 \right\|}}

\newcommand{\ubar}[1]{\ensuremath{\underaccent{\bar}{#1}}}
\def\subject{\text{s. to}}

\def\bbtheta{\mathbf{\theta}}

\newcommand{\eqrefp}[1]{(\hyperref[eqn_the_problem]{$\textup{P}_{#1}$})}
\newcommand{\eqrefpt}[1]{(\hyperref[eqn_variational]{$\textup{\~P}_{#1}$})}

\def\nnil{\nil}
\newcounter{prob}

\usepackage{style_files/custom_tex}
\usepackage{amsthm}
\usepackage{amsmath}
\usepackage{amssymb}
\usepackage{graphicx}
\usepackage{mathtools}
\usepackage{enumerate}
\usepackage{enumitem}
\usepackage{footnote}
\usepackage{float}
\usepackage{xspace}
\usepackage{multirow}
\usepackage{nicefrac}
\usepackage{wrapfig}
\usepackage{framed}
\usepackage{url}
\usepackage{units}
\usepackage{blkarray}

\usepackage{tikz}
\usetikzlibrary{arrows,chains,matrix,positioning,scopes}

\let\hat\widehat
\let\tilde\widetilde
\let\check\widecheck

\newcommand{\bx}{\boldsymbol{x}}

\newcommand{\bu}{\boldsymbol{u}}
\newcommand{\by}{\boldsymbol{y}}

\newcommand{\bg}{\boldsymbol{g}}

\newcommand{\bw}{\boldsymbol{w}}

\DeclareFontFamily{OMX}{MnSymbolE}{}
\DeclareFontShape{OMX}{MnSymbolE}{m}{n}{
    <-6>  MnSymbolE5
   <6-7>  MnSymbolE6
   <7-8>  MnSymbolE7
   <8-9>  MnSymbolE8
   <9-10> MnSymbolE9
  <10-12> MnSymbolE10
  <12->   MnSymbolE12}{}
\DeclareSymbolFont{mnlargesymbols}{OMX}{MnSymbolE}{m}{n}
\SetSymbolFont{mnlargesymbols}{bold}{OMX}{MnSymbolE}{b}{n}
\DeclareMathDelimiter{\llangle}{\mathopen}{mnlargesymbols}{'164}{mnlargesymbols}{'164}
\DeclareMathDelimiter{\rrangle}{\mathclose}{mnlargesymbols}{'171}{mnlargesymbols}{'171}

\usepackage{prettyref}

\newcommand{\savehyperref}[2]{\texorpdfstring{\hyperref[#1]{#2}}{#2}}
\newrefformat{eq}{\savehyperref{#1}{\textup{(\ref*{#1})}}}
\newrefformat{eqn}{\savehyperref{#1}{Equation~\eqref{#1}}}
\newrefformat{lem}{\savehyperref{#1}{Lemma~\ref*{#1}}}
\newrefformat{def}{\savehyperref{#1}{Definition~\ref*{#1}}}
\newrefformat{line}{\savehyperref{#1}{line~\ref*{#1}}}
\newrefformat{thm}{\savehyperref{#1}{Theorem~\ref*{#1}}}
\newrefformat{corr}{\savehyperref{#1}{Corollary~\ref*{#1}}}
\newrefformat{cor}{\savehyperref{#1}{Corollary~\ref*{#1}}}
\newrefformat{sec}{\savehyperref{#1}{Section~\ref*{#1}}}
\newrefformat{app}{\savehyperref{#1}{Appendix~\ref*{#1}}}
\newrefformat{ap}{\savehyperref{#1}{Appendix~\ref*{#1}}}
\newrefformat{assum}{\savehyperref{#1}{Assumption~\ref*{#1}}}
\newrefformat{as}{\savehyperref{#1}{Assumption~\ref*{#1}}}
\newrefformat{ex}{\savehyperref{#1}{Example~\ref*{#1}}}
\newrefformat{fig}{\savehyperref{#1}{Figure~\ref*{#1}}}
\newrefformat{alg}{\savehyperref{#1}{Algorithm~\ref*{#1}}}
\newrefformat{rem}{\savehyperref{#1}{Remark~\ref*{#1}}}
\newrefformat{conj}{\savehyperref{#1}{Conjecture~\ref*{#1}}}
\newrefformat{prop}{\savehyperref{#1}{Proposition~\ref*{#1}}}
\newrefformat{proto}{\savehyperref{#1}{Protocol~\ref*{#1}}}
\newrefformat{prob}{\savehyperref{#1}{Problem~\ref*{#1}}}
\newrefformat{claim}{\savehyperref{#1}{Claim~\ref*{#1}}}
\newrefformat{que}{\savehyperref{#1}{Question~\ref*{#1}}} 

\usepackage{tcolorbox}
\tcbuselibrary{listings,theorems}

\usepackage{xcolor}
\usepackage{bm}

\definecolor{mycyan}{RGB}{0,204,204}
\newcommand{\chiche}[1]{{\color{mycyan}[chiche: #1]}}
\definecolor{myorange}{RGB}{252, 105, 16}

\usepackage{booktabs}

\title{Every Sample Counts: Supervised Fine-Tuning of Language Models with Pointwise Constraints}

\author{%
 Ignacio Hounie\thanks{Equal contribution. Corresponding authors.}
 \\
 University of Pennsylvania \\
\texttt{ihounie@seas.upenn.edu}
\And
Ignacio Boero$^*$
 \\
 University of Pennsylvania \\
\texttt{iboero@seas.upenn.edu}
\And
 Alejandro Ribeiro
 \\
 University of Pennsylvania \\
\texttt{aribeiro@seas.upenn.edu}
  }

\begin{document}

\maketitle

\begin{abstract}
Fine-tuning language models often requires enforcing  constraints on individual inputs without compromising downstream performance. Existing constrained alignment methods impose constraints on average, which can induce undesirable disparities across inputs or users. We propose a novel alignment framework that addresses this gap by enforcing per-sample constraints while still minimizing an average loss. To mitigate the impact of overly restrictive constraints and outliers, we introduce a learned, sample-dependent relaxation that minimally adjusts the constraints, trading off a user-defined relaxation cost with the training objective. 
To address practical duality and optimization challenges, we develop an augmented Lagrangian approach tailored to this formulation.
We demonstrate the flexibility of the framework by instantiating it under distinct small language-model fine-tuning tasks and constraints: safety in instruction following, preferences in function calling and length in re-ranking. Across these settings, our approach reduces tail constraint violations while largely preserving the model's performance.
\end{abstract}


\section{Introduction}
Language Models (LMs) pre-trained on vast corpora of data have enabled unprecedented capabilities across a wide range of tasks. However, as these models are increasingly deployed in high-stakes domains~\citep{huang2024comprehensive, esposito2025largelanguagemodelsmissioncritical}, a distinction emerges between performance objectives-- what we \emph{want} a model to be as good as possible at -- and critical constraints that \emph{must} be enforced in deployment. As a result, 
a growing body of work uses constrained optimization frameworks to fine-tune LMs under requirements such as   safety~\citep{dai2024safe, huang2024one, wachi2024stepwise, liu2024enhancing-safe-dpo, peng2024enhancing}, length control~\citep{lagrange-llms}, robustness to adversarial attacks~\citep{ziegler2022adversarial} and avoidance of reward hacking~\citep{rlvr-with-multiple-rewards}.
%

Existing constrained methods focus only on satisfying requirements in expectation over the distribution of inputs. A key limitation of this approach is that constraint satisfaction for individual inputs is influenced both by their inherent difficulty and their representation in the training distribution. There is a growing body of empirical evidence showing that LLMs effectively underperform on~\emph{tail} data~\citep{llm-tails,code-tails, VLA_tails, hallucination-tails, bi2025gptoss-tail}. Consequently, models may satisfy constraints on average while still exhibiting significant violations on critical subsets of inputs, particularly those underrepresented in training data or more challenging to learn.
In this work, we address this limitation by proposing a novel pointwise constrained fine-tuning framework:
\begin{itemize}
    \item[(C1)] We enforce required properties as pointwise constraints over the input distribution, rather than only in expectation.
\end{itemize}
We tackle the resulting constrained problem through the dual domain. Although a growing body of work has studied constrained learning via dual formulations~\cite{chamon2020probably, bai2022achievingzeroconstraintviolation, 10480186, manolache2025learningapproximatelyequivariantnetworks, navid}, the non-convexity of neural networks prevents strong duality, weakening guarantees of both feasibility and optimality~\cite{elenter2024near}.
To overcome these optimization challenges, we turn to augmented Lagrangian methods, which are increasingly used in deep learning settings~\cite{al-fioretto-diffusion-robots, al-pinn, al-prunning, al-discrete-diffusion, ramirez2025dual, boero2025cole}, and enable strong duality and primal recoverability in a wider range of non-convex settings.
\begin{itemize}
    \item[(C2)] We apply Augmented Lagrangian methods for pointwise constrained fine-tuning.
\end{itemize}
%
%

Pointwise constraints can be overly restrictive, particularly when constraint losses conflict with the primary task objective, or in the presence of outliers. To address these challenges, we introduce a mechanism for learned pointwise constraint relaxations. 
This relaxation seeks an optimal trade-off between a user-defined relaxation cost and the training objective. 
\begin{itemize}
    \item[(C3)] We propose a mechanism that relaxes the constraints while jointly solving the learning task.
\end{itemize}
To showcase the flexibility of our approach,  we evaluate it across three small language-model finetuning tasks: instruction following, tool calling, and passage re-ranking. In these settings, we instantiate pointwise constraints that capture safety, preferences, and relevance, respectively. We empirically validate that pointwise constraints can reduce sample-level violations, and that controlling these tails can be beneficial.
\begin{itemize}
    \item[(C4)] We provide empirical evidence that our framework reduces the tails of the constraint-violation distribution, which translates to better trade-offs between the objective and constraint metrics.
\end{itemize}
To demonstrate tail reduction, we analyze the empirical distribution of constraint violations across methods. To evaluate performance, we compare objective--constraint trade-offs using task-specific downstream metrics. We show that our approach can expand the Pareto frontier in key areas, offering more favorable trade-offs than existing methods in most task-specific metrics. 
\section{Fine-tuning with Pointwise Constraints}~\label{sec:problem-form}
A LM $\pi_{\theta}(\cdot \,\vert\, \bx)$: $\mathcal{X} \to \Delta (\mathcal{Y})$ maps a token sequence $\bx = \{x_t\}_{t=1}^{|\bx|}  \in \mathcal{X}$ to a distribution over  outputs $\by \in \mathcal{Y}$. Given an objective function $\ell_0(\pi_\theta(\cdot \,\vert\, \bx), \by): \Delta (\mathcal{Y}) \times \mathcal{Y}  \to \mathbb{R}$, and $m$ constraint functions $\ell_i(\pi_\theta(\cdot \,\vert\, \bx), \by): \Delta (\mathcal{Y}) \times \mathcal{Y}  \to \mathbb{R}, \; i=1, \;...\;, m$  with tolerances $\epsilon_i$, we want to solve the following inequality constrained learning problem:
\begin{equation}\label{eqn:s-csft}\tag{P}
\begin{aligned}
   \pi_\theta^* = \argmin_{\theta \in \Theta}
         \; &\EE_{(\bx,\by)\sim\mathcal{D}_0}
        \Big[\ell_0(\pi_\theta(\cdot \,\vert\, \bx), \by)\Big]   \\[0.2cm]
    \text{s. to} \quad 
     &\ell_i (\pi_\theta(\cdot \,\vert\, \bx), \by) \leq \epsilon_i, \quad
    \mathcal{D}_i-\text{a.e.}
\end{aligned}
\end{equation}
This formulation allows us to optimize a given utility in expectation while enforcing critical requirements to hold pointwise.  In Section \ref{sec:dual} we show how to solve the problem using its dual formulation, where we assume $m=1$ and let the objective and constraint distributions be equal for notational conciseness. 
We first illustrate the relevance of formulation~\eqref{eqn:s-csft} through three LM fine-tuning tasks.

\paragraph{Instruction following and safety.} 
Safe alignment of LMs requires complying with user requests while reliably refusing \emph{all} harmful requests. 
We promote helpfulness in the LM answers by minimizing the KL-divergence against a helpful but unsafe reference model $\pi_{\text{ref}}$ over instruction pairs $(\bx_h, \by_h) \sim \mathcal{D}_\text{H}$. 
We ensure refusal to harmful requests by enforcing the probability that the LM gives a refusal answer $\by_r$ to be above a certain threshold $\epsilon_\text{U}$, for all unsafe prompts following a distribution $x_u \sim \mathcal{D}_{\text{U}}$. 
Moreover, to avoid the common but undesirable side effect of the LM refusing to answer safe prompts \cite{xstest}, we enforce the probability that the LM gives a refusal answer $\by_r$ to be below a certain threshold $\epsilon_\text{H}$, for all safe prompts following the marginal of the objective distribution $x_h \sim \mathcal{D}^X_{\text{H}}$. 
This yields the pointwise alignment problem:
\begin{equation}\label{prob:safety}
\begin{aligned}
\min_{\theta \in \Theta}. \quad & \EE_{(\bx_h,\by_h)\sim\mathcal{D}_{\text{H}}}
\Big[ D_{KL}\left(\pi_\theta(\by_h \mid \bx_h)\|\pi_{\text{ref}}(\by_h \mid \bx_h)\right) \Big] \\
\text{s.to} \quad & \pi_\theta(\by_r\mid \bx_u) \geq \epsilon_{\text{U}} \quad
x_u \sim \mathcal{D}_{\text{U}}-\text{a.e.}\\
& \pi_\theta(\by_r\mid \bx_h) \leq \epsilon_\text{H} \quad
x_h \sim \mathcal{D}^X_\text{H}-\text{a.e.}
\end{aligned}
\end{equation}
This differs from approaches that minimize a preference loss on a single dataset that contains pairs  of responses labeled for both their safety and helpfulness (e.g. \cite{kim2025safedpo}), or minimize negative log likelihood over a mixture of safe and helpful data (e.g. \cite{bianchi2023safetyllama}). See Appendix~\ref{app:related} for related work.

\paragraph{Tool Calling Preferences.} Language models often fail to identify \textit{when} to call a tool, leading to hallucinations when information is missing or correct tools are unavailable. We study a decision-making setting where the model must choose, for a given query $x$, one out of four possible behaviors: answering directly without calling tools, asking follow-up questions, alerting the user that needed tools are unavailable, or call a tool. We treat the expected KL divergence to a trained function-calling reference policy $\pi_{\text {ref }}$ as the minimization objective. We enforce sample-level constraints on the length-normalized log-probabilities: the correct behavior $y_w$ must meet a minimum threshold $\epsilon_{\text {win}}$, while it must fall below a tolerance $\epsilon_{\text{lose}}$ for all the incorrect behaviors ${y_l}_1,{y_l}_2, {y_l}_3$. The resulting pointwise constrained problem is:

\begin{equation}
\begin{aligned}
\min_{\theta \in \Theta}. & \;\mathbb{E}_{\bx \sim \mathcal{D}}\left[D_{\mathrm{KL}}\left(\pi_\theta(\by \mid \bx) \| \pi_{\mathrm{ref}}(\by \mid \bx)\right)\right] \\
\text { s. to } & \frac{1}{\left|\by_w\right|} \log \pi_\theta\left(\by_w \mid \bx\right) \geq \epsilon_{\text {win }}, \; \mathcal{D}-\text {a.e.} \\
& \frac{1}{\left|\by_l\right|} \log \pi_\theta\left({\by_l}_i \mid \bx\right) \leq \epsilon_{\text {lose}}, \; \; \mathcal{D}_i- \text {a.e.},\; i=1,\ldots,3
\end{aligned}
\end{equation}
Many preference optimization losses are penalties on the margin between preferred and rejected responses (e.g.~\cite{rafailov2024direct, meng2024simpo}). Others also add penalties to increase the preferred response probabilities (e.g.~\cite{xu2024cpo,Zhao2023SLiCHFSL}). See Appendix~\ref{app:related} for a discussion about related work.

\paragraph{Re-ranking and length optimization.}
When retrieving passages in RAG pipelines \cite{lewis2020retrieval}, it is often desirable to balance the relevance of a retrieved passage with its length, as processing long contexts increases latency and cost. We propose to train a re-ranking encoder $\pi_\theta$ that favors token efficiency (shorter passages) without sacrificing retrieval quality. We treat the preference for shorter passages as the minimization objective, using a length-biased ranking loss, which we denote as $\Lambda$. We enforce retrieval quality through a pointwise margin constraint. 
For each query $x$, positive passage $y_p$, and negative set $\mathbf{y_n}=\{y_n^i\}$ drawn from $\mathcal D$, the margin between the positive score $\pi_\theta(y_p\mid x)$ and each negative score $\pi_\theta(y_n^i\mid x)$ must be at least $\epsilon$. The resulting optimization is:

\begin{equation}
\begin{aligned}
\min_{\theta \in \Theta}. \quad & \EE_{(x,y_p,\mathbf{y_n}) \sim \mathcal{D}}
\Big[ \Lambda\big(\pi_\theta({y_p} \mid x), \pi_\theta({\mathbf{y_n} \mid x)}\big) \Big] \\
\text{s. to} \quad &  \pi_\theta(y_p \mid x) - \pi_\theta(\mathbf{y_n} \mid x) \geq \epsilon, \quad
 \mathcal{D}-\text{a.e.}
\end{aligned}
\end{equation}

To the best of our knowledge, length-aware re-ranking is a novel task. We include this example to illustrate how our framework can accommodate secondary goals without compromising performance.

\section{Dual Pointwise Fine-tuning}
\label{sec:dual}
Let $\blambda:\mathcal{X}\times\mathcal{Y}\to \mathbb{R}_+$ be a nonnegative function in $L^1_+(\mathcal{D})$. The corresponding Lagrangian is
\begin{align}\label{eq:lagrangian}
    \mathcal{L}(\pi_\theta,\blambda)
    \;:=\;
    \mathbb{E}\;\![\ell_0\big(\bbx, \bby, \pi_\theta\big)\;+\blambda(\bx,\bby)\,\big(\ell(\bx,\bby,\pi_\theta)-\epsilon\big)],   
\end{align}
 
where the Lagrangian minimizer is defined as 
\begin{align}
\pi_{\theta}(\blambda)\in \argmin_{\theta \in \Theta} \mathcal{L}(\pi_\theta,\blambda).
\end{align}
Finding $\pi_{\theta}(\blambda)$ amounts to solving an unconstrained problem, where the constraint at each point $(\bbx,\bby)$ is weighted by $\blambda(\bbx,\bby)$. Its value is known as the dual function; $g(\blambda) = \mathcal{L}(\pi_{\theta}(\blambda),\blambda)$.
Maximizing the dual function over the multipliers $\blambda$ yields the \emph{dual problem} of problem \ref{eqn:s-csft},
\begin{align}\label{eq:dual-parameterized}\tag{D}
\blambda^\star
    \;\in\;
    \argmax_{\blambda\in L^1_+(\mathcal{D})} g(\blambda).
\end{align}

In convex optimization problems, the dual formulation introduces an alternative strategy to solve constrained problems, as the primal solution is a Lagrangian minimizer for the optimal dual variable. However, for most LM parametrizations, such as transformers, the problem \ref{eqn:s-csft} is not convex with respect to the parameters $\theta$. Therefore, $\pi_{\theta}(\blambda^\star)$ is not necessarily related to a solution of problem \ref{eqn:s-csft}. We overcome these challenges by turning to Augmented Lagrangian methods, presented next.

\subsection{Augmented Lagrangian} Augmented Lagrangian methods for inequality-constrained problems were proposed to close the duality gap in nonconvex problems~\cite{RockaDual}; see also~\citep{AL-review}. This consists of replacing the linear dependence between $\blambda$ and $(\ell - \epsilon)$ in \eqref{eq:lagrangian} with an augmented, often quadratic, one. In particular, using the augmenting function introduced in~\cite{RockaDual}, the Augmented Lagrangian of problem~\eqref{eqn:s-csft} is:

\begin{equation}
\begin{aligned}\label{eq:aug_lagrangian_og}
\mathcal{L}_{A}(\pi_\theta,\blambda, \alpha) =  \mathbb{E}\Big[\ell_0(\bx, \bby, \pi_\theta)  \; + \alpha \Big(\ell(\bx, \by, \pi_\theta)-\epsilon+\frac{\lambda(\bx, \by)}{2\alpha}\Big)_+^2- \frac{\lambda(\bx, \by)^2}{4\alpha}\Big],
\end{aligned}
\end{equation}

where $(x)_+ = \text{max}\{0,x\}$ clips negative values to $0$. Evaluating \eqref{eq:aug_lagrangian_og} on the Augmented Lagrangian minimizer,

\begin{align}
\pi_{\theta}^A(\blambda, \alpha)\in \argmin_{\theta \in \Theta} \mathcal{L}_A(\pi_\theta,\blambda,\alpha),
\end{align}

yields the augmented dual function $g_{A}(\blambda, \alpha)= \mathcal{L}_{A}(\pi_\theta^A(\blambda,\alpha),\blambda,\alpha)$. Analogous to the dual problem, maximizing $g_{A}(\lambda,\alpha)$ yields the \textit{augmented dual problem},

\begin{equation}\label{prob:aug_dual}\tag{$\text{D}_\text{A}$}
(\blambda_A^\star, \alpha^*_A)
    \;\in\;
    \argmax_{\blambda\in L^1_+(\mathcal{D}), \; \alpha \in \mathbb{R}_+} g_{A}(\blambda, \alpha).
\end{equation}

The standard and augmented dual problems differ in how constraint violations are weighted during the primal minimization. 
The standard Lagrangian uses a linear dependence,  which may be unable to account for the adverse curvature present in nonconvex problems. The quadratic term in the augmented Lagrangian provides an additional curvature correction, enabling guarantees beyond those of the standard dual. 
In particular, under the conditions studied in~\cite{boero2025cole}, an approximate solution for a constrained learning problem can be obtained via the primal minimizer of the empirical augmented Lagrangian at the optimal augmented dual pair~\cite[Theorem 2.3]{boero2025cole}. Empirically, we also find that the augmented formulation outperforms the standard Lagrangian in our settings; see Appendix~\ref{apx:aug_dual_vs_dual}.

\subsection{Regularization in the Dual Domain}\label{sec:resilience}
By imposing constraints to hold almost everywhere, one challenge that might render problem~\eqref{eqn:s-csft} impractical is that the constraint level $\epsilon$ may be overly restrictive in some low probability regions. A possible approach  is to relax the constraint to hold with high probability, instead of almost everywhere. However, this approach treats all violations equally regardless of their magnitude, allowing arbitrarily large violations as long as they are restrained to a low probability region. Instead, we propose a relaxation that controls both the frequency and severity of the violations.

Let $\bu \in L_+^\infty(\mathcal{D})$ be a pointwise relaxation function and let $c:L_+^\infty(\mathcal{D})\to\reals$ be a convex cost that measures the severity of the violations. We define the \textit{relaxed primal problem} as:
\begin{equation}\label{eqn_res_primal}\tag{$\text{P}_\text{R}$}
\begin{aligned}
(\pi_\theta^R, \bu_+^R) \in \argmin_{\theta \in \Theta, \, \bu \in L^\infty(\mathcal{D})}
         \;  \;
          &\EE
        \Big[\ell_0(\bx, \bby, \pi_\theta) \Big] + c(\bu)
        \\[0.1cm]
        \subject \; \;
     &\ell(\bx, \by, \pi_\theta) \, 
        \;\leq\;
        \epsilon + \bu(\bx,\by) \ \quad   \mathcal{D}-a.e.
\end{aligned}
\end{equation}
The function $\bu$ relaxes~\eqref{eqn:s-csft} by increasing pointwise the admissible threshold from $\epsilon$ to $\epsilon+\bu(\bx,\by)$. The cost $c(\bu)$ penalizes these relaxations, which can account for both frequency and severity. In this sense, problem~\eqref{eqn_res_primal} selects the optimal relaxation according to the cost $c$. Soft-margin SVMs are a classical finite-dimensional example; see Appendix~\ref{app:related}. In the context of constrained learning with average constraints, this has been called a \emph{Resilient} relaxation due to its robustness to mis-specification~\cite{hounie2024resilient}. 
Relaxing a constrained problem and penalizing the relaxation has a well-known dual-domain interpretation.  It corresponds to regularizing the dual function by the Fenchel conjugate of the function $c$; see~\cite{rockafellar1997convex}. In particular, the choice of $c(u)= \beta ||\bu||^2 := \beta E[\bu^2(\bx,\by)]$ induces a quadratic penalty on the dual multipliers. 

\begin{proposition}~\label{prop:resilient_augmented_dual}
The augmented dual problem of ~\eqref{eqn_res_primal} with relaxation cost $c(\bbu) = \beta ||\bu||^2,\; \beta>0$, is
\begin{align}\label{eq:res_aug_lagrangian_dual}\tag{$\text{D}_\text{R}$}
    (\blambda^*_R, \alpha^*_R) = \argmax_{\blambda \in L^1_+(\mathcal{D}), \alpha \in \reals_+} \; \Big[ g_R(\blambda,\alpha) := 
    \;  g_A\left(\frac{\blambda \beta}{\alpha+\beta}, \frac{\alpha \beta}{\alpha+\beta}\right) - \frac{||\blambda||^2}{4\beta(\alpha+\beta)} \Big].
\end{align}
\begin{proof}
    See Appendix~\ref{apx:proof-res}
\end{proof}

\end{proposition}

Proposition~\ref{prop:resilient_augmented_dual} provides a practical way to solve the relaxed problem: the variable $\bu$ does not need to be optimized directly, since its effect is absorbed into the dual problem through a regularization term and a rescaling of the dual variables. As the appropriate level of relaxation is generally not known in advance, $\beta$ acts as a tunable parameter controlling how expensive violations are. Smaller values of $\beta$ allow larger relaxations and may improve the objective value, whereas larger values enforce the constraints more strictly. In particular, the original constrained problem is recovered in the limit $\beta\to\infty$: in the primal domain, any nonzero relaxation receives infinite cost and $\bu=0$ is enforced; equivalently, in the dual domain, the quadratic regularization vanishes and the rescaling of the augmented-dual variables disappears. A sensitivity-based characterization of the optimal relaxation is given in Appendix~\ref{apx:dual_reg_apx}. In Section \ref{sec:exps} we show  empirically that varying $\beta$ enables efficient control of the tradeoff between objective value and constraint satisfaction.

\subsection{Empirical Dual Gradient Ascent}~\label{sec:algo} The augmented Lagrangian in~\eqref{eq:aug_lagrangian_og} still depends on the unknown distribution $\mathcal{D}$. 
Following standard empirical risk minimization, we replace the expectation with the sample average over an i.i.d. dataset $S=\{(\bbx_i,\bby_i)\}_{i=1}^N$ drawn from $\mathcal{D}$. 
We also replace the functional multiplier $\blambda(\bx,\by)$ with sample-wise multipliers $\blambda\in\mathbb{R}^N_+$, assigning one nonnegative multiplier to each training sample. 
The empirical augmented Lagrangian $\hat{\mathcal{L}}_{A}(\pi_\theta,\blambda):\Theta\times\mathbb{R}^N_+\to\mathbb{R}$ is then defined as

\begin{equation}
\begin{aligned}\label{eq:aug_lagrangian_emp}
\hat {\mathcal{L}_{A}}(\pi_\theta,\blambda, \alpha) =  \frac{1}{N}\sum_1^N \Big[\ell_0(\bx_i, \bby_i, \pi_\theta)  \; + \alpha \Big(\ell(\bx_i, \by_i, \pi_\theta)-\epsilon+\frac{\lambda_i}{2\alpha}\Big)_+^2- \frac{\lambda_i^2}{4\alpha}\Big],
\end{aligned}
\end{equation}
 
with the respective primal minimizer
\begin{align}\label{eq:primal_min_emp}
\hat \pi_{\theta}^A(\blambda, \alpha)\in \argmin_{\theta \in \Theta} \hat{\mathcal{L}}_A(\pi_\theta,\blambda,\alpha).
\end{align}

For fixed values of $\blambda$ and $\alpha$, the empirical Lagrangian in~\eqref{eq:aug_lagrangian_emp} is simply a sample-averaged training loss, where each sample contribution is reweighted and reshaped according to its constraint violation and multiplier $\lambda_i$. Thus, solving~\eqref{eq:primal_min_emp} reduces to the usual task of minimizing a differentiable empirical objective over the model parameters, which can be approached with the same stochastic-gradient methods used for standard empirical risk minimization. Then, we can define the empirical approximation of the augmented dual function by evaluating the empirical Lagrangian at this minimizer $\hat g_{A}(\blambda, \alpha)=  \hat{\mathcal{L}_{A}}(\hat \pi_{\theta}^A(\blambda, \alpha),\blambda, \alpha)$. Maximizing it gives the \textit{empirical augmented dual problem}:
\begin{align}\label{eq:emp-dual-aug}
    (\hat{\blambda}_A^\star,\hat\alpha_A^\star)
    \in
    \argmax_{\blambda\in \mathbb{R}^N_+,\,\alpha>0}
    \hat g_A(\blambda,\alpha).
\end{align}

Using the same construction, the \textit{empirical relaxed dual problem} is given by
\begin{align}\label{eq:emp-dual-res}
(\hat{\blambda}_R^\star,\hat\alpha_R^\star)
\in
\argmax_{\blambda\in\mathbb{R}^N_+,\,\alpha>0}
\Big[ \hat g_R(\blambda,\alpha) := 
    \;  \hat g_A\left(\frac{\blambda \beta}{\alpha+\beta}, \frac{\alpha \beta}{\alpha+\beta}\right) - \frac{\sum_{i=1}^N \blambda_i^2}{N4\beta(\alpha+\beta)}\Big].
\end{align}

To solve problems~\eqref{eq:emp-dual-aug} and~\eqref{eq:emp-dual-res}, we use dual gradient ascent over $\blambda$ with a fixed augmentation parameter $\alpha$. 
This corresponds to the \textit{shifted penalty} method, which converges to an optimal dual pair when $\alpha$ is sufficiently large~\cite[Algorithm 7.3]{shifted}. Although the theoretical threshold for a sufficiently large $\alpha$ is unknown a priori, we find empirically that performance is relatively insensitive to this choice; see Appendix~\ref{apx:alpha_stab}.

Computing gradients of $\hat g_A(\blambda,\alpha)$ and $\hat g_R(\blambda,\alpha)$ with respect to $\blambda$ requires evaluating the primal minimizer $\hat\pi_\theta^A(\blambda,\alpha)$ defined in~\eqref{eq:primal_min_emp}. 
Since solving this inner minimization exactly at every dual iteration is prohibitively expensive, we approximate it with a small number of primal gradient steps per dual ascent step, resulting in Algorithm~\ref{alg:primal-dual}. 
This inexact update is standard in primal-dual methods: convergence can still be guaranteed as long as the primal minimization error decreases sufficiently across iterations~\citep[Theorem~3.1]{boero2025cole}. 
We defer further details on the shifted-penalty algorithm, the empirical dual gradients, and the implementation to Appendix~\ref{app:algorithm}.

Another advantage of relaxing the empirical dual problem~\eqref{eq:emp-dual-res} is that, under mild conditions, it has zero duality gap, as shown in Appendix~\ref{app:Theo}.
Developing finite-sample bounds on the convergence of the empirical dual problem to its population counterpart is beyond the scope of this work. Nevertheless, in Section~\ref{sec:exps}, we empirically observe that both the objective and the distribution of constraint violations generalize out of sample.


\begin{table*}[t]
\caption{Comparison of constraint enforcement and resulting multipliers.
Pointwise constraints induce a distinct multiplier per sample in the dual,
whereas the average-constraint dual uses a single scalar multiplier; the penalty uses
a fixed weight.}
\centering
\setlength{\tabcolsep}{6pt}
\renewcommand{\arraystretch}{1.2}
\begin{tabular}{llll}
\toprule
\textbf{Formulation} 
& \textbf{Constraint} 
& \textbf{Multiplier} \\
\midrule
\textbf{pen}
& $-$
& $\lambda_0$ (fixed) \\[1mm]

\textbf{avg}
& $\EE\!\big[\ell(\bx_i,\by_i,\pi_\theta)\big] \le \epsilon$
& $\lambda$ (variable)\\[1mm]

\textbf{point} 
& $\ell(\bx_n,\by_n,\pi_\theta) \le \epsilon $
& $\blambda(\bx_n,\by_n)$ \\[1mm]

\textbf{relax}
& $\ell(\bx_n,\by_n,\pi_\theta) \le \epsilon + \bu(\bx_n,\by_n)$
& $\blambda(\bx_n,\by_n)$ \\[1mm]

\bottomrule
\end{tabular}
\label{tab:formulations_multipliers}
\end{table*}
\section{Numerical Experiments}\label{sec:exps}

In this section, we evaluate whether pointwise constraints provide a practical advantage over average-constrained or unconstrained formulations. Satisfying constraints only on average can still allow large violations on individual samples. Explicitly, the dual problem of average constraints corresponds to a single constant multiplier across samples, replacing  sample-wise $\blambda(\bx,\by)\in L^q_+(\mathcal D)$ with a scalar $\lambda\in\mathbb R_+$; see Appendix~\ref{sec:appx-avg-form}. Fixed penalties are even more restrictive, since this scalar is chosen in advance rather than optimized. Both approaches ignore heterogeneity across samples, applying the same penalty to inputs that may require different levels of enforcement. 
Since pointwise multipliers adapt the penalty at the sample level, we expect them to reduce the lower tail of the constraint-violation distribution. This motivates our first hypothesis to validate:
\begin{itemize}
    \item[(\textbf{H1}) ] Pointwise constraints reduce the frequency and severity of sample-level violations.
\end{itemize}

%
%
However, reducing tail violations is only useful if it also improves task-level behavior. We therefore also validate the second hypothesis: 

\begin{itemize}
    \item[(\textbf{H2}) ] 
    Tail reduction translates to better tradeoff between objective and constraints downstream metrics.
\end{itemize}

To verify the first hypothesis, we compare four formulations across all experiments: pointwise augmented dual learning (\textbf{point}), its relaxed variant (\textbf{relax}), average-constraints (\textbf{avg}), and a fixed-penalty (\textbf{pen}); see Table~\ref{tab:formulations_multipliers}. 
To evaluate the second hypothesis, we additionally compare against task-specific baselines for each setting. 

We focus on settings involving small language models (i.e., less than $7$ billion parameters), whose lower computational requirements enhance reproducibility~\cite{semmelrock2025reproducibility} and are a cost effective alternative in many tasks~\cite{belcak2025smalllanguagemodelsfuture, wang2025comprehensive}. All experiments were run using a fractional GPU allocation equivalent to one quarter of a single NVIDIA B200 GPU, which corresponds to $45$GB of GPU memory. Runtime and memory overhead are discussed in Appendix~\ref{apx:ablations}. Next, we briefly describe the experimental settings for each task and describe them thoroughly in Appendix~\ref{app:appendix-exp-details}.


\textbf{Tool Calling with Hallucination Mitigation}.
We use the \texttt{When2call} dataset~\cite{when2call}, derived from BFCL~\cite{bfcl} and APIGen~\cite{apigen}, to evaluate tool-calling decision-making: the model selects between tool execution, follow-up questions, or refusing unanswerable queries. The data includes $4.5$k tool-calling, $3$k follow-up, and $3$k unable to answer training examples, from which we hold out a random $5$\% subset for validation. We use Llama-3.2-1B-Instruct~\cite{grattafiori2024llama3} and xLAM-2-1b-fc-r~\cite{prabhakar2025apigen-xlam} as base models. Following~\cite{when2call}, we evaluate behavior selection using an offline multiple-choice format.
 
\textbf{Instruction Following with Safety Refusal}. The helpfulness dataset is Alpacalong-1k~\cite{alpaca-long1k}, a subset of the longest Alpaca instruction--response pairs~\cite{alpaca}. Safety data is a filtered 12k samples subset of BeaverTails~\cite{ji2024beavertails}. We also use the evaluation split of the PKU-SafeRLHF-30k dataset to evaluate safety. We use a Llama-7b~\cite{touvron2023llama} base model finetuned on Alpacalong-1k as a base model, and safeDPO~\cite{kim2025safedpo} as a baseline. We evaluate safety by generating responses to unsafe prompts and scoring them with a learned harmfulness classifier as in~\cite{dai2024safe} and rule-based keyword matching as in~\cite{bianchi2023safetyllama}; helpfulness is measured with AlpacaEval~\cite{alpaca_eval} on benign instructions.

\textbf{Re-ranking with Length Optimization}. We train and evaluate on a portion of {\emph{MS MARCO} v2.1} \cite{bajaj2016ms} with $50$k training and $5$k validation instances, each consisting on a query, a positive passage and $9$ negative passages. We use ModernBert-base~\cite{mordern-bert} as the base model.
For relevance we report {MRR@10} and {Hit@3}. For length efficiency we report {AvgLength@3} and a length-weighted ranking metric which we call {LenRank@10}; see equation \eqref{apx:len-rank}. We compare against training \textit{monoBERT} \cite{nogueira2019multi} on our dataset, and out-of-the-box rerankers \textit{BGE-large }\cite{bge_embedding} and \textit{ms-marco-MiniLM} \cite{hf_cross_encoder_msmarco_minilm_l12_v2}.
%

\subsection{Constraint distributions and objective value} 

\begin{table}[b]
\caption{Statistics of objective and constraint violations in the test set across reranking, function calling and safety  settings. The Conditional Value at Risk (CVaR) represents the average loss for samples exceeding the 95-th percentile of the loss distribution. ~\emph{Our approach results in smaller tail violations with lower average objectives.}}
\label{tab:eval_slacks_stats}
\centering
\begin{tabular}{lllllll}
\hline
 & \multicolumn{2}{c}{\textbf{Re-ranking}} & \multicolumn{2}{c}{\textbf{Function Calling}} & \multicolumn{2}{c}{\textbf{Safety}} \\
\cline{2-3}\cline{4-5}\cline{6-7}
 & \textbf{Constraint} & \textbf{Objective} & \textbf{Constraint} & \textbf{Objective} & \textbf{Constraint} & \textbf{Objective} \\
 & \textbf{CVaR} & \textbf{Mean} & \textbf{CVaR} & \textbf{Mean} & \textbf{CVaR} & \textbf{Mean} \\
\hline
pen   &  105 \terr{10}   &   33 \terr{5}    &   16 \terr{5}    &   8.16 \terr{0.04}  &  0.32 \terr{0.03}  &  1.9 \terr{0.8} \\
avg   &  120 \terr{23}   &   42 \terr{2}    &   18 \terr{2}    &   7 \terr{1}       &  0.31 \terr{0.04}  &  0.38 \terr{0.03}\\
point &   2.4 \terr{0.5}  &   8.2 \terr{0.02} &   9.4 \terr{0.8} &   7.5 \terr{0.5}   &  0.09 \terr{0.002} &  0.19 \terr{0.06}\\
\hline
\end{tabular}
\end{table}

\textbf{Pointwise constraints reduce large violations.} We begin by analyzing the distribution of constraints across samples for our method, average constraints and fixed penalties. Figure~\ref{fig:CDF-reranking} shows the empirical Cumulative Distribution Function (CDF) for the reranking constraint evaluated on test samples. For constraints that are satisfied by a large margin, our methods CDF lies below all others, indicating that competing approaches have a higher proportion of samples with smaller losses. Conversely, for larger constraint violations, our approach's CDF rises substantially above all others, showing that it has very few samples with large violations. This suggests that average constraints and fixed penalties satisfy by large margins the constraints on ``easy'' samples, but largely fail on ``hard'' ones. In contrast, our approach ensures more consistent performance among samples. 
We include CDF plots for all other tasks in Appendix~\ref{app:results}, where similar trends are observed.

\begin{figure}[t]
    \centering
    \includegraphics[width=0.95\linewidth]{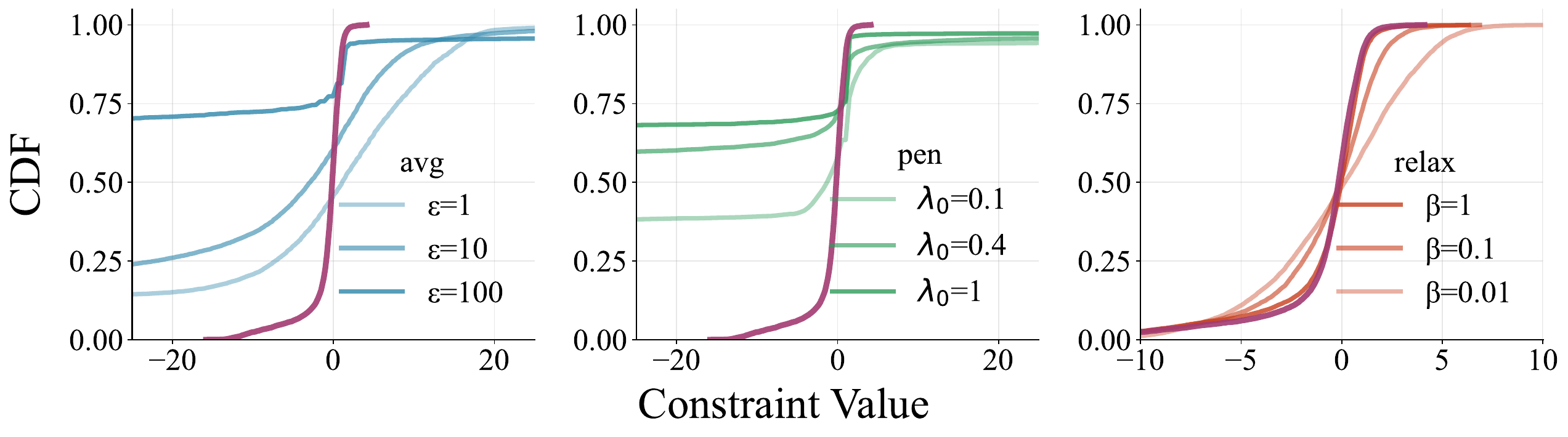}
  \caption{Empirical Cumulative Distribution Function (CDF) of sample-wise constraint values $\ell(\pi_{\theta},x,y)$ in the test set for the reranking task. Positive values violate the constraint. We compare our method (purple line) with average constraints (left), penalty methods (center) and relaxed pointwise constraints (right). 
  \emph{Our approach results in significantly fewer samples incurring large violations.}}
  \label{fig:CDF-reranking}
\end{figure}

\textbf{Tighter average constraints do not reduce large violations.}
For a given constraint level, imposing pointwise constraints is more restrictive than imposing the constraint only on average.
Therefore, we analyze whether tightening the average constraint level could lead to similar tail behaviour, which would require the constraint violations to be sufficiently concentrated around its mean. As shown in Figure~\ref{fig:CDF-reranking} (right), we observe that the distribution of the constraint remains long tailed even when tightening the average constraint level (decreasing $\epsilon$). That is, stricter average constraints do not improve larger losses (to the right of the plot).
%

\textbf{Larger fixed penalties do not reduce large violations.} Similarly, we analyze the effect of increasing the fixed penalty $\lambda_0$ on the distribution of constraint violations. As illustrated in Figure~\ref{fig:CDF-reranking}(center), although larger penalties shift the distribution toward smaller average violations, the right tail remains largely unchanged. Explicitly, while moderate violations are reduced, large violations persist.

\paragraph{Pointwise constraints achieve better objective trade-offs.} 
While bounding the constraints for every sample, solving the pointwise constrained problem requires attaining the best average objective. In Table~\ref{tab:eval_slacks_stats} we show that across the three settings pointwise constraints achieve a smaller average objective than average constraints, while attaining smaller violation tail statistics (Conditional Value at Risk).
This illustrates that a constant $\blambda$ over-penalizes inputs, resulting in a worse average objective.

\begin{figure}[b]
        \centering        \includegraphics[width=0.95\linewidth]{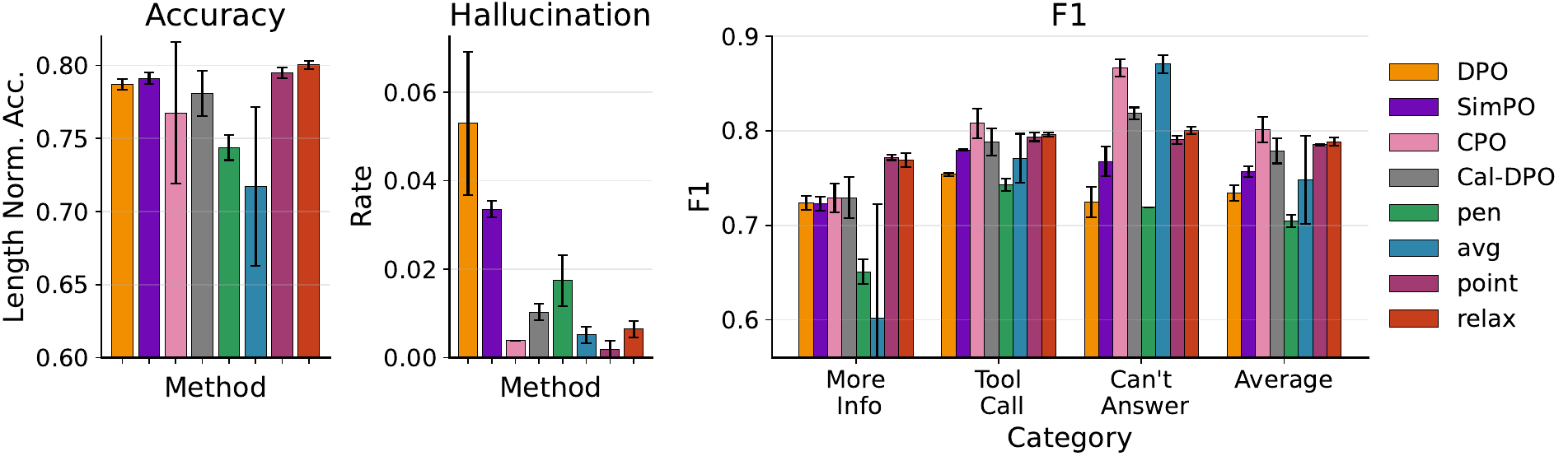}
    \caption{Length normalised accuracy, Hallucination Rates and F1 scores for Llama-3.2-1B-Instruct~\cite{grattafiori2024llama3} finetuned on the when2call~\cite{when2call} dataset for different methods. Height of bars denotes the mean and errorbars  the standard deviation, computed across three seeds.  A full table with results for other models is included in Table~\ref{tab:when2call_results}. \emph{Pointwise constraints achieve a more homogeneous distribution of errors across different categories.}}
    \label{fig:comp_methods-tool}
\end{figure}
\textbf{Relaxation controls the objective  vs. constraint trade-off.}
The relaxation cost function controls the trade-off between relaxing the constraints and the average objective. To illustrate this, we perform an ablation on the hyperparameter $\beta$ in the quadratic relaxation cost $c(\bbu) = \beta \|\bbu\|_2^2$. 
As $\beta$ increases, the relaxations decrease and the solution approaches its hard constrained counterpart, as depicted in Figure~\ref{fig:CDF-reranking} (right). Smaller values of $\beta$ lead to improved objective (Figure~\ref{fig:pareto_reranker}).



\subsection{Implications in downstream performance}

\paragraph{Improvements across function calling error categories.} Relying solely on aggregate metrics 
can obscure differences between alignment methods. For example,  while SimPO~\cite{meng2024simpo} and Pointwise constraints achieve similar length-normalized accuracies (Figure~\ref{fig:comp_methods-tool} left plot), this masks significant performance disparities across different  error types and output categories (Figure~\ref{fig:comp_methods-tool}, middle and right plot). 
Imposing Average Constraints or using the CPO loss 
skews the F1 score distribution across error categories, improving on the ``Unable to answer'' category 
  at the expense of ``Follow-up question'' performance. 
   In contrast, Pointwise constraints yield a more uniform improvement of performance across both categories. 
   Finally, the Resilient relaxation matches or improves performance with respect to the pointwise method across all prompt types while maintaining near-zero hallucination rates. We observe similar trends across XLaM models, as detailed in Appendix~\ref{app:results-func} (Table~\ref{tab:when2call_results}). 
\begin{figure}[t]
    \centering
    \begin{minipage}[c]{0.45\linewidth}
        \centering
        \includegraphics[width=\linewidth]{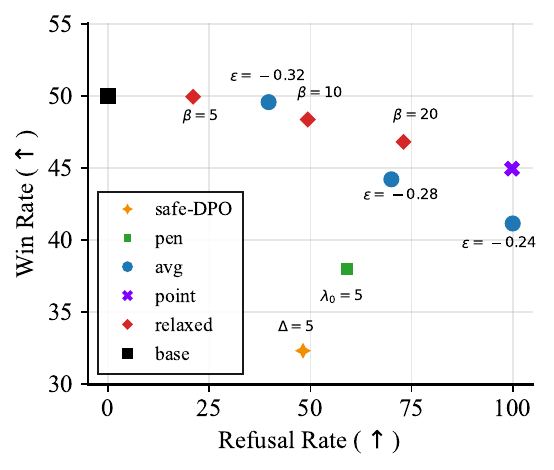}
        \end{minipage}
    \begin{minipage}[c]{0.5\linewidth}
        \centering
\begin{tabular}{lcc}
\toprule 
& \multicolumn{1}{c}{Alpaca} & SafeRLHF \\ 
& $D_{\mathrm{KL}}$($\downarrow$) & cost($\uparrow$) \\
\midrule
Safe-DPO ($\Delta = $ 5) & 1.4\terr{0.2} & 45\terr{20} \\
Safe-DPO ($\Delta = $ 10) & 1\terr{1} & 47\terr{30} \\
pen ($\lambda_0 = $ 5) & 1.9\terr{0.8} & 71\terr{9} \\
avg ($\epsilon = - $ 0.32) & 0.37\terr{0.04} & 53\terr{10} \\
avg ($\epsilon = - $ 0.24) & 5\terr{1} & 78\terr{30} \\
point & 0.2\terr{0.08} & 96\terr{2} \\
relax ($\beta = $ 20) & 0.12\terr{0.01} & 77\terr{2} \\
relax ($\beta = $ 10) & 0.087\terr{6e-6} & 60\terr{1} \\
\bottomrule
\end{tabular}
    \end{minipage}
    \caption{Safety vs Helpfulness trade-off. (Left) length controlled win rates on AlpacaEval~\cite{alpaca_eval} and refusal rates over harmful held-out prompts from beavertails~\cite{ji2024beavertails}. (Right) KL divergence in a held out set of alpaca~\cite{alpaca}, and percentage of responses classified as harmless ($\text{cost}<0$) by \texttt{beaver-7b-unified-cost} from PKU-SafeRLHF-30k~\cite{dai2024safe} evaluation split. This is a subset of runs to ease readability, Table~\ref{tab:app-safety} includes results for a larger hyperparameter sweep. \emph{Pointwise constraints achieve high refusal rates with minimal degradation in instruction following abilities.}}
    \label{fig:pareto_safety}
\end{figure}

\paragraph{Improvements in safety-helpfulness trade-offs.} 
As shown in Figure~\ref{fig:pareto_safety}, the model fine-tuned with pointwise safety constraints achieves 100\% refusal rate on  unsafe prompts while experiencing only a 5\% reduction in length controlled win rate (LC-WR) in AlpacaEval. In contrast, achieving the same refusal rate with average constraints or baseline methods reduces the LC-WR by 9\%, i.e. highly limiting the model's instruction following capabilities. This is explained by a considerably larger KL divergence with respect to the pre-trained model (right table). The resilient approach allows recovery of performance by relaxing constraints at the cost of lower refusal rates for unsafe prompts, while pareto dominating average constraints at higher refusal rates.

\paragraph{Length can be reduced without hurting relevance.}
This experiment is intended to illustrate the flexibility of constrained formulations, so there is no established benchmark for direct comparison. Nevertheless, the table in Figure~\ref{fig:pareto_reranker} shows that our method prefers shorter passages than models trained only for relevance, without degrading relevance performance. 
\begin{figure}[b]
    \centering
    \begin{minipage}[c]{0.52\linewidth}
        \centering
        \includegraphics[width=0.95\linewidth]{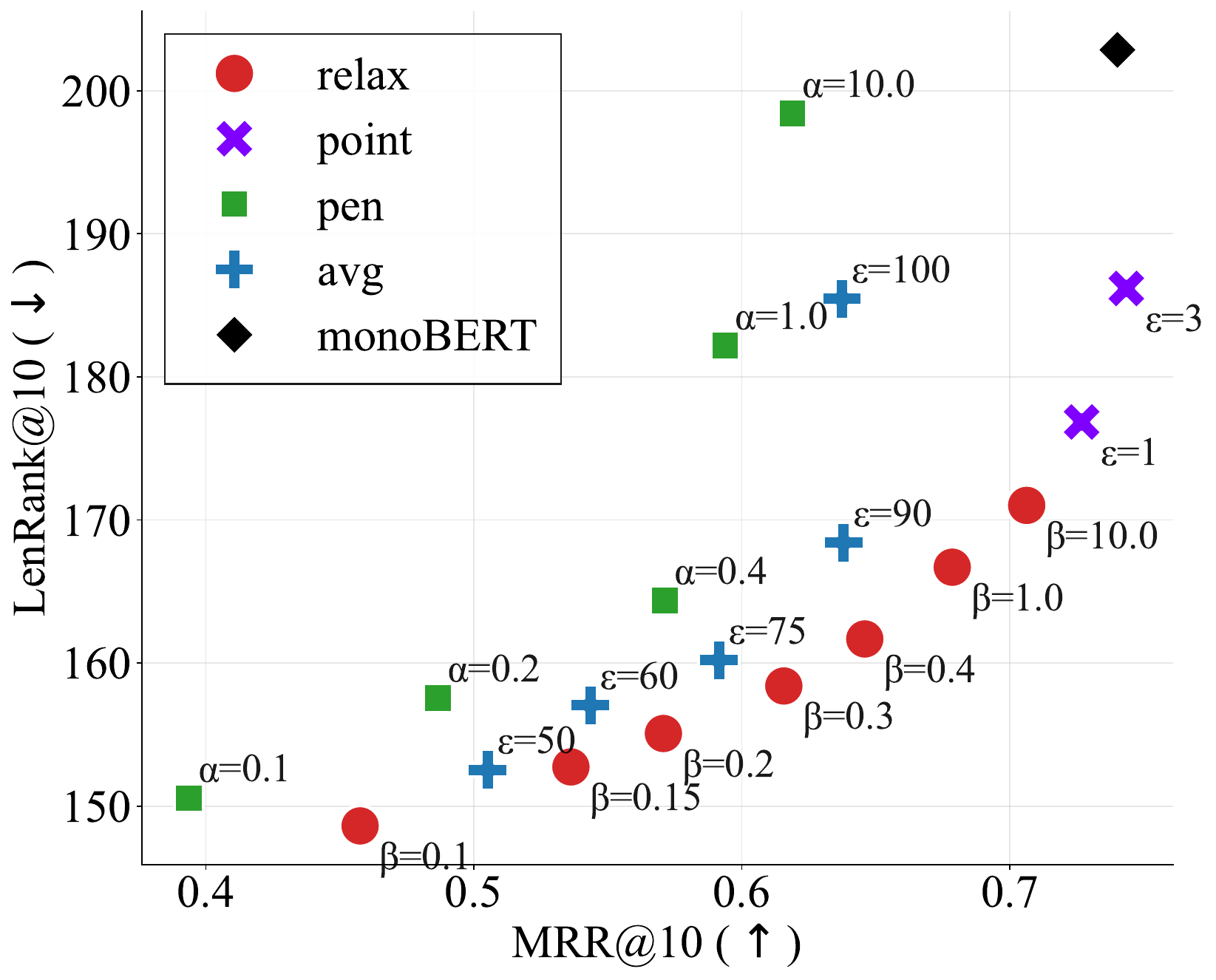}
        \end{minipage}
    \hfill
    \begin{minipage}[c]{0.46\linewidth}
        \centering
 \begin{tabular}{lcc}
    \hline
     & MRR & LenRank \\
    Method & $(\uparrow)$ & $(\downarrow)$ \\
    \hline
    MiniLM & 0.73 & 202 \\
    BGE-large & 0.73 & 205 \\
    monoBERT  & \textbf{0.74} \terr{1e-3} & 203 \terr{1} \\
    point ($\epsilon=3$) & 0.73 \terr{1e-2} & 183 \terr{2} \\
    point ($\epsilon=1$) & 0.72 \terr{1e-2} & 176 \terr{1} \\
    relax ($\beta=10$) & 0.71 \terr{5e-5} & \textbf{170.8} \terr{0.2} \\
    \hline
\end{tabular}
    \end{minipage}
    \caption{MRR@10 for ranking the positive passage versus LenRank@10 metric. The base model is ModernBERT-base~\cite{mordern-bert}, fine-tuned on a subset of MS MARCO~\cite{bajaj2016ms}. \textbf{Left}: Relevance--length tradeoff achieved by the different methods. Table~\ref{tab:reranking_all_results} includes results for a larger sweep of hyperparameters and includes the Hit@3 and AvgLen@3 metrics. \textit{Relaxed pointwise constraints define the Pareto frontier.} \textbf{Right}: Comparison against baselines trained only for relevance. \textit{Length can be reduced without significant degradation in relevance.}}
    \label{fig:pareto_reranker}
\end{figure}

\paragraph{Pointwise constraints pareto dominate the relevance--length tradeoff.}
The reduction in length without loss of relevance is only achieved when relevance is enforced through pointwise constraints. As shown in the image of Figure~\ref{fig:pareto_reranker}, alternative formulations incur a relevance drop as length decreases, whereas pointwise constraints can preserve relevance while shortening the retrieved context. Moreover, when some relevance degradation is acceptable in exchange for shorter outputs, pointwise constraints remain the strongest method overall, as across the full range of constraint relaxations and penalty values, the solutions obtained with pointwise constraints and resilience form the Pareto frontier.
%

%
%
%

%

\bibliography{refs}
\bibliographystyle{unsrtnat}

\appendix
\section*{Appendix}
The appendix is organized as follows:
\begin{itemize}
    \item Additional related work (Appendix~\ref{app:related}): a brief review of related approaches in constrained and robust learning,  and LM finetuning for the preference and safety tasks.
    \item Theoretical Results (Appendix~\ref{app:Theo}): Results supporting the claims made in the main text and their proofs.
    \item Algorithm (Appendix~\ref{app:algorithm}): Algorithm implementation and dual updates.
    \item Experimental Details (Appendix~\ref{app:appendix-exp-details}): Description of the different tasks, baselines, datasets and relevant hyperparameters used in our experiments.
    \item Experimental Results (Appendix~\ref{app:results}): Additional figures and tables, including negative results.    \item Ablations (Appendix~\ref{apx:ablations}): Ablations on standard versus augmented duality, sensitvity to the choiche of parameter $\alpha$ and computational costs.
\end{itemize}

\section{Additional related work}\label{app:related}
We compare first our approach to related learning paradigms that either use per sample constraints or seek to control the distribution of per sample losses. We then provide a brief review of related work for each of the tasks, focusing on their relation to our framework.

\paragraph{Pointwise Constraints in learning}
In statistical learning, per sample constraints were used as early as in Support Vector Machines~\cite{Vapnik1963PatternRU, SVMs} which require  classifying all points correctly while maximizing the classification margin. Soft margin SVMs~\cite{cortes1995support} can thus be interpreted as a particular case of our resilient framework with a linear relaxation cost $c(\bbu) = \beta\sum\bbu_i$. Motivated by soft margin SVMs and control applications, theoretical work in non-convex scenario optimization with relaxations~\cite{campirelaxedscenario, garatti2025nonconvscenario} provides statistical guarantees in terms of the probability of non-zero relaxations. Similarly, \cite{ramirez2025feasiblelearning} train neural networks to optimize \emph{a single loss} by solving a feasibility problem that requires every sample’s loss to be below a prescribed threshold. This approach thereby lacks an objective that establishes a preference among feasible solutions. Lastly, pointwise constraints over trajectories have also been studied in Reinforcement Learning settings~\cite{castellano22a-rl, anytime-rl, sootla2022saute}.

\paragraph{Robust Learning} Several approaches including exponential tilting~\cite{exp-tilting}, some forms of distributionally robust optimization and conditional value at risk minimization~\cite{dro-cvar, robey2022probabilistically, xu2025robust} have been proposed to control properties of the loss distribution beyond the expected risk such as tail behavior or worst-case performance under distribution shifts. 
While effective at shaping specific aspects of the loss distribution, these methods typically optimize a \emph{single} scalar objective and do not impose constraints.

\paragraph{Preference Optimization}
  Reinforcement Learning based approaches to learning from human feedback (RLHF) used binary preferences~\cite{stiennon2020learning} or rankings~\cite{ouyang2022training} to train a reward model, and then used this reward model to finetune the LM using Reinforcement Learning. These were followed by an extensive body of work on the design supervised preference losses that forego explicit reward models, using the preference labels directly to train the model. Most supervised preference losses optimize a transformation of the probability of LM outputs, as detailed in Table~\ref{app:tab-preferences-transformations}. Nonetheless, if we define the margin as
\begin{align*}
\text{margin}(\pi,\bx, \by_w, \by_l)  = \phi(\pi(\by_w|\bx), \by_w) - \phi(\pi(\by_l|\bx), \by_l ), 
\end{align*}
most methods can be interpreted as minimizing a penalty function on expectation, namely, 
\begin{align*}
    \min_{\theta} \quad & \mathbb{E}_{\bx, \by_w,\by_l \sim \mathcal{D}} \left[\text{penalty}(\text{-margin}(\pi,\bx, \by_w, \by_l))\right].
\end{align*}
Common penalty functions include the softplus (e.g., DPO~\cite{rafailov2024direct}, SimPO~\cite{meng2024simpo}) and ReLU (e.g., RRHF~\cite{yuan2023rrhf}, RePO~\cite{wu2025repo}). For explicit definitions of the losses used in our experiments see Table~\ref{tab:preference_losses}. 

A limitation of margin based losses is that they can be minimized while reducing the likelihood of both responses, whereas the likelihood of the preferred response should increase. Some works like SLiC-HF~\cite{Zhao2023SLiCHFSL} and CPO~\cite{xu2024cpo} have added a fixed penalty to increase the probability, explicitly,
\begin{align*}
    \min_{\theta} \quad & \mathbb{E}_{\bx, \by_w,\by_l \sim \mathcal{D}} \left[\text{margin}(\pi,\bx, \by_w, \by_l)-\lambda_0  \log \pi(\by_w|\bx)\right].
\end{align*}
\begin{table}[t]
    \caption{Common transformations applied to LM predicted probabilities in  preference optimization losses.}
    \label{app:tab-preferences-transformations}
    \centering
    \begin{tabular}{c|c}
    \hline
       $\phi(\pi(\by|\bx), \by)$ & Examples \\\hline      
         $\log\pi(\by|\bx)$    &  SLiC~\cite{Zhao2023SLiCHFSL}, CPO~\cite{xu2024cpo} \\
         $\frac{1}{|\by|}\log\pi(\by|\bx)$ & SimPO~\cite{meng2024simpo}, RRHF~\cite{yuan2023rrhf},  RePO~\cite{wu2025repo} \\
         $\log\frac{\pi(\by|\bx)}{\pi_{\text{ref}}(\by|\bx)}$& DPO~\cite{rafailov2024direct}, IPO~\cite{azar2024general}\\
         \hline
    \end{tabular}
\end{table}

In the same vein, Cal-DPO~\cite{xiao2024cal} adds a quadratic penalty for both prefered and disprefered responses to the DPO objective. In our framework, Cal-DPO can be interpreted as a penalty method that tackles the following relaxed pointwise constrained problem:
\begin{align*}
\min_{\theta, u} \quad & \mathbb{E}_{\bx, \by_w,\by_l \sim \mathcal{D}} \left[ -\log \sigma\left(\log \frac{\pi_\theta\left(\by_w \mid \bx\right)}{\pi_{\mathrm{ref}}\left(\by_w \mid \bx\right)}-\log \frac{\pi_\theta\left(\by_l \mid \bx\right)}{\pi_{\mathrm{ref}}\left(\by_l \mid \bx\right)}\right) + u^2(\bx,\by_w) + u^2(\bx,\by_l) \right]  \\
\text{s.t.} \quad & \log \pi_{\theta}(\by_w|\bx) = \log \pi_{\text{ref}} + \frac{1}{2\beta} +u(\bx,\by_w), \quad \mathcal{D}-a.e. \\
& \log \pi_{\theta}(\by_l|\bx) = \log \pi_{\text{ref}}(\by_l|\bx) -\frac{1}{2\beta} + u(\bx,\by_l), \quad \mathcal{D}-a.e, 
\end{align*}
The main difference with respect to our formulation is that we use the divergence to the pre-trained model as an objective, rather than the DPO loss. Moreover,  we tackle the relaxed primal through the augmented lagrangian dual, with per-sample multipliers, as explained in Sections~\ref{sec:resilience} and ~\ref{sec:algo}, rather than with a penalty method.

\paragraph{Safety} Reward models trained to align LMs with preference-based human feedback initially targeted both \emph{helpful} and \emph{harmless} completions~\cite{bai2022training}. 
Subsequently,~\cite{dai2024safe} trained separate reward and cost models for \emph{helpfulness} and \emph{harmlesness}, and proposed to maximize the helpfulness reward $R$ under an average constraint on the safety cost $C$, explicitly,
\begin{align}
\underset{\theta \in \Theta}{\operatorname{maximize}} \; \mathbb{E}_{\boldsymbol{x} \sim \mathcal{D}}\mathbb{E}_{\boldsymbol{y} \sim \pi_\theta}[R(\boldsymbol{x}, \boldsymbol{y})] \text{ subject to } \mathbb{E}_{\boldsymbol{x} \sim \mathcal{D}}\left[\mathbb{E}_{\boldsymbol{y} \sim \pi_\theta}\left[C(\boldsymbol{x}, \boldsymbol{y})\right]\right] \leq b
\end{align}
This formulation has been tackled by many works (see e.g. ~\cite{ji2025pku2, dai2024safe, wachi2024stepwise, huang2024one}) and extended to constraints over multiple costs~\cite{botong2025alignment} or unsafe content categories~\cite{yang2026cat-rl-multiple}.
Similar to other RLHF alignment tasks, recent work has demonstrated that standard supervised losses such as DPO~\cite{kim2025safedpo} and IPO~\cite{zhang2024bi-ipo}  --with minimal modifications to handle both safety and helpfulness labels-- can match or often exceed the performance of RL based methods.

Despite the rich body of work in safety tuning, the literature has focused primarily in using preference pairs to prevent performance degradation in helpfulness or instruction following capability (e.g. in ~\cite{dai2024safe, wachi2024stepwise, huang2024one, botong2025alignment, zhang2024bi-ipo, ji2025pku2, kim2025safedpo, qi2025midpo-MOE, niu2025mitigating-nullspace, yang2026cat-rl-multiple}). Moreover, helpfulness and harmlessness are evaluated for the same set of prompts. In many settings, the desirable response to unsafe queries varies depending on the type of query and user intent, with sometimes short, hard refusals being  appropriate~\cite{mu2024rule}. 

In contrast, we follow~\cite{bianchi2023safetyllama} and use an instruction following dataset commonly used for SFT to promote helpfulness-- i.e., without negative pairs -- and unsafe prompts paired with canonical refusals. Therefore our approach does not need paired labeled data, and relies only on increasing the probability of the desired refusal behaviour for unsafe prompts.

Lastly, other complementary approaches to safety alignment  include different parameterizations, such as LoRA based mixtures of experts~\cite{qi2025midpo-MOE}, constraints on parameter updates~\cite{niu2025mitigating-nullspace}, and multi-objective preference tuning~\cite{guo2024controllable-multiobj, yang2024metaaligner-multiobj}.

\paragraph{Pointwise Relaxations in other learning tasks.} The relaxed formulation in Section~\ref{sec:resilience} generalizes a classical idea in supervised learning: introducing pointwise slack variables to relax hard constraints while penalizing the amount of relaxation. A canonical example is the soft-margin SVM, where linear relaxation costs \(c(\bbu)\) are used to allow margin violations,
\begin{align}\label{eqn_svm}
	\min_{\bbtheta \in \Theta,\, \bbu \in \reals^N}. &\frac{1}{2} \norm{\bbtheta}^2 + \gamma \sum_{i = 1}^N u_i
	\\
	\text{s. to}& \;\;1-y_i \bbtheta^T \bbx_i \leq u_i
		\text{,} \quad i = 1,\dots,N
		\text{.}
\end{align}
We present the linear binary classification case for simplicity, although this framework has been extended to nonlinear settings through the so-called ``kernel trick'' and to regression tasks. The nominal pointwise constraint in~\eqref{eqn_svm}, \(\ell(\bx,\by,\theta)=1-y\bbtheta^\top\bbx\leq 0\), enforces correct classification with margin. In the hard-constrained case, minimizing \(\norm{\bbtheta}^2\) is equivalent to maximizing the classification margin. More generally, \citet{campirelaxedscenario} consider relaxed problems with general objectives and pointwise constraint functions, and provide bounds on the constraint violations of their solutions.

\section{Theoretical Results}\label{app:Theo}

\subsection{Average constraint formulation}
\label{sec:appx-avg-form}

We provide justification for the claim that the average constrained problem results in a scalar dual variable. The average constrained formulations replace the infinite number of constraints for a single one corresponding to the expected value taken over the same distribution, namely

\begin{align}\label{eqn:avg-formulation}
    \min_{\theta \in \Theta}. &
         \; \EE_{(\bx,\by)\sim\mathcal{D}}
        \Big[\ell_0(\bbx, \bby, \pi_\theta)\Big]   \\[0.2cm]
    \text{s. to} 
    & \; \EE_{(\bx,\by)\sim\mathcal{D}}\Big[\ell(\bx, \by, \pi_\theta)\Big] \leq \epsilon.
\end{align}

The Lagrangian associated to problem~\eqref{eqn:avg-formulation} is defined as
\begin{align}\label{eq:apx-lagrangian}
    \mathcal{L}(\pi_\theta,\blambda)
    \;:=\;&
    \mathbb{E}_{(\bx,\by)\sim\mathcal{D}}\;\![\ell_0\big(\bbx, \bby, \pi_\theta\big)\;] + \lambda\mathbb{E}_{(\bx,\by)\sim\mathcal{D}}\;\![\big(\ell(\bx,\bby,\pi_\theta)-\epsilon\big)] \\
    \;=\;& \mathbb{E}_{(\bx,\by)\sim\mathcal{D}}\;\![\ell_0\big(\bbx, \bby, \pi_\theta\big)\; + \lambda\big(\ell(\bx,\bby,\pi_\theta)-\epsilon\big)]
\end{align}

Thus, the dual problem is defined as:

\begin{equation}
    \max_{\lambda\in\reals_+} \min_{\theta \in \Theta} \mathbb{E}_{(\bx,\by)\sim\mathcal{D}}\;\![\ell_0\big(\bbx, \bby, \pi_\theta\big)\; + \lambda\big(\ell(\bx,\bby,\pi_\theta)-\epsilon\big)].
\end{equation}
\subsection{Dual Regularization and Optimal Relaxation}
\label{apx:dual_reg_apx}

We characterize the optimal relaxation $\bbu^\star$ as a balance between the improvement obtained by relaxing the constraints and the user-defined cost of doing so. Let the perturbation function associated with problem~\eqref{eqn:s-csft} be
\begin{align}\label{eqn:p-csft}
P(\bbu) =
\min_{\theta\,\in\,\Theta}
         \;  \;
          &\EE
        \Big[\ell_0(\bx, \bby, \pi_\theta)\Big]  
        \\[0.1cm]
        \subject \; \;
        & \ell(\bx, \by, \pi_\theta) 
        \leq
        \epsilon + \bbu(\bx,\by)  
        \quad \mathcal{D}\text{-a.e.} \nonumber
\end{align}
The function $P(\bbu)$ gives the best achievable objective value after relaxing the constraint by $\bbu$. Since larger relaxations enlarge the feasible set, $P$ is pointwise non-increasing. Thus, the relevant tradeoff is between the decrease in objective value induced by increasing $\bbu$ and the cost paid for that relaxation.

Let $\bar c:\reals_+\to\reals$ be a convex increasing pointwise cost, and define the functional
\begin{equation}
c(\bbu) := \EE\big[\bar c(\bbu(\bx,\by))\big].
\end{equation}

The relaxed problem can then be written as choosing the relaxation $\bbu$ that minimizes the total value
\begin{equation}
P(\bbu)+c(\bbu).
\end{equation}

This formulation makes explicit that the optimal relaxation balances marginal improvements in the objective against marginal increases in the relaxation cost.

To formalize this balance, we use subdifferentials. For a functional \(P:\mathcal L_p(\mathcal D)\to\overline{\reals}\), its subdifferential at \(u\in\mathcal L_p(\mathcal D)\) is
\begin{equation}
\partial P(u)
=
\left\{
z\in \mathcal L_q(\mathcal D):
P(v)-P(u)
\geq
\langle z,v-u\rangle
\ \text{for all } v\in\mathcal L_p(\mathcal D)
\right\},
\end{equation}

where
\begin{equation}
\langle z,v-u\rangle
=
\EE\big[z(\bx,\by)(v(\bx,\by)-u(\bx,\by))\big].
\end{equation}

Subgradients therefore play the role of generalized marginal sensitivities of the optimal value with respect to the relaxation.

\begin{proposition}[Optimal relaxation]\label{prop_res_primal_apx}
Consider the relaxed primal problem
\begin{align}\label{eqn_res_primal_apx}
(\pi_\theta^\star, \bbu^\star) \in 
\argmin_{\pi_\theta,\,\bbu}
         \;  \;
          &\EE
        \Big[\ell_0(\bx, \bby, \pi_\theta)\Big]
        + c(\bbu)
        \\[0.1cm]
        \subject \; \;
        & \ell(\bx, \by, \pi_\theta) 
        \leq
        \epsilon + \bbu(\bx,\by) 
        \quad \mathcal{D}\text{-a.e.}
\end{align}
Assume that the subdifferential sum rule holds at $\bbu^\star$. Then $\bbu^\star$ satisfies
\begin{equation}\label{eqn:res-eq}
0\in \partial P(\bbu^\star)+\partial c(\bbu^\star),
\end{equation}
and $\pi_\theta^\star$ is a solution of the perturbed problem \(P(\bbu^\star)\).
\end{proposition}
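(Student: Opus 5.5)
The plan is to reduce the joint minimization over $(\pi_\theta,\bbu)$ to a minimization over $\bbu$ alone, and then read off the optimality condition with Fermat's rule.

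\textbf{Step 1: decouple the two variables.} I would first show that
\begin{equation*}
\min_{\pi_\theta,\bbu}\ \EE[\ell_0(\bx,\bby,\pi_\theta)] + c(\bbu)
\;=\;
\min_{\bbu}\ \big[P(\bbu)+c(\bbu)\big],
\end{equation*}
with the constraint $\ell\le\epsilon+\bbu$ $\mathcal D$-a.e.\ imposed on the left-hand side. The argument is to fix $\bbu$ and minimize over $\pi_\theta$ first; the feasible set in $\pi_\theta$ is then exactly the feasible set of \eqref{eqn:p-csft}. Since $c(\bbu)$ does not depend on $\theta$, the inner minimum equals $P(\bbu)+c(\bbu)$.

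\textbf{Step 2: recover both claims from optimality.} Take an optimal pair $(\pi_\theta^\star,\bbu^\star)$ and any $\bbu$. Feasibility and Step 1 give
\begin{equation*}
\EE[\ell_0(\pi_\theta^\star)]+c(\bbu^\star)\;\le\; P(\bbu)+c(\bbu).
\end{equation*}
Because $\pi_\theta^\star$ is feasible for $P(\bbu^\star)$, we also have $P(\bbu^\star)\le \EE[\ell_0(\pi_\theta^\star)]$.

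Setting $\bbu=\bbu^\star$ in the first inequality yields $\EE[\ell_0(\pi_\theta^\star)]\le P(\bbu^\star)$. Combined with the second inequality this forces equality, so $\pi_\theta^\star$ attains the minimum of the perturbed problem $P(\bbu^\star)$. That is the second claim.

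For a general $\bbu$, the same first inequality with equality substituted shows that $\bbu^\star$ is a global minimizer of $F:=P+c$ over $L^p$.

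\textbf{Step 3: Fermat's rule.} I would note that $0\in\partial F(\bbu^\star)$ follows directly from the definition of the subdifferential given just before the proposition. Global minimality gives $F(v)-F(\bbu^\star)\ge 0=\langle 0,v-\bbu^\star\rangle$ for all $v$, which is exactly this inclusion. No convexity of $P$ is needed here, since $\partial$ is defined by the global affine minorant inequality.

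\textbf{Step 4: split the subdifferential.} Invoke the assumed sum rule $\partial(P+c)(\bbu^\star)=\partial P(\bbu^\star)+\partial c(\bbu^\star)$ to conclude \eqref{eqn:res-eq}.

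\textbf{Main obstacle.} The delicate point is Step 1 in infinite dimensions: the minima must be attained, so that $\bbu^\star$ lies in the domain and $P(\bbu^\star)$ is finite. I would handle this by working directly with the given minimizer $(\pi_\theta^\star,\bbu^\star)$ as above, which avoids assuming attainment of $P$ at other $\bbu$; using $\inf$ there suffices. The sum rule is where the real analytic content would lie, but it is taken as a hypothesis. If one wanted to justify it, the natural route is a qualification condition such as continuity of $c$ at a point of $\operatorname{dom}P$, as in Rockafellar.
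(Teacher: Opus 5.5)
Your proposal is correct and follows the same route as the paper's proof: reduce the joint problem to the nested problem $\min_{\bbu} P(\bbu)+c(\bbu)$, use first-order optimality $0\in\partial(P+c)(\bbu^\star)$, split it with the assumed sum rule, and observe that $\pi_\theta^\star$ solves $P(\bbu^\star)$. You additionally spell out the sandwich argument and the fact that Fermat's rule needs no convexity for this global-minorant subdifferential, both of which the paper leaves implicit; one caveat applies only to your closing aside, not to the proof itself: the Moreau--Rockafellar route to the sum rule also requires $P$ to be convex, which generally fails for nonconvex parameterizations.
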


\begin{proof}
The joint minimization over $\pi_\theta$ and $\bbu$ in~\eqref{eqn_res_primal_apx} is equivalent to the nested problem
\begin{align}\label{nested}
    \min_{\bbu} \; P(\bbu)+c(\bbu).
\end{align}
Therefore, any optimal relaxation $\bbu^\star$ satisfies the first-order optimality condition
\begin{align}\label{eq:first-order-opt}
    0 \in \partial(P+c)(\bbu^\star).
\end{align}
By the subdifferential sum rule, under the stated regularity condition,
\[
\partial(P+c)(\bbu^\star)
=
\partial P(\bbu^\star)+\partial c(\bbu^\star).
\]
Combining this identity with~\eqref{eq:first-order-opt} yields
\[
0\in \partial P(\bbu^\star)+\partial c(\bbu^\star),
\]
which proves~\eqref{eqn:res-eq}. Since $\pi_\theta^\star$ minimizes the objective among models feasible under the relaxation $\bbu^\star$, it is a solution of the perturbed problem \(P(\bbu^\star)\).
\end{proof}

Proposition~\ref{prop_res_primal_apx} shows that the optimal relaxation is reached when the marginal improvement in the objective value is balanced by the marginal cost of relaxing the constraint. Equivalently,
\[
-\partial c(\bbu^\star) \cap \partial P(\bbu^\star) \neq \emptyset,
\]
so the cost of relaxation determines how much constraint difficulty is absorbed by $\bbu^\star$. For a quadratic cost, larger relaxations are assigned to regions where the perturbation function has larger sensitivity, i.e., where constraints are more costly to enforce.

\subsection{Proof of Proposition~\ref{prop:resilient_augmented_dual}}
\label{apx:proof-res}

\begin{proof}
Fix any $\theta\in\Theta$, $\blambda\in\mathbb{R}^N_+$, and $\alpha>0$. The Augmented Lagrangian of~\eqref{eqn_res_primal} for cost $c(\bu) = \beta E[\bu^2(\bx,\by)]$ can be written as

\begin{multline}\label{eq:res_lagrangian}
\mathcal{L}_{R}(\pi_\theta,\bu,\blambda, \alpha) =  \mathbb{E}\Big[\ell_0(\bx, \bby, \pi_\theta)  \; + \frac{\alpha}{4}\Big(2(\ell(\bx, \by, \pi_\theta)-\epsilon - \bu(\bx,\by)) +\frac{\lambda(\bx, \by)}{\alpha}\Big)_+^2 \\
- \frac{\lambda(\bx, \by)^2}{4\alpha} + \beta \bu^2(\bx,\by)\Big].
\end{multline}

Since the expression in
\eqref{eq:res_lagrangian} is separable in $(\bx,\by)$, we can minimize over each
$\bu(\bx,\by)$. For a fixed sample $(\bx,\by)$, let
\[
m(\bx,\by):= 2(\ell(\bx,\by,\pi_\theta)-\epsilon)+\frac{\lambda(\bx,\by)}{\alpha}.
\]
The terms of
$\mathcal{L}_R(\theta,\bu,\blambda,\alpha)$ that depend on $\bu(\bx,\by)$ are
\begin{align}
\phi(u(\bx,\by))
&:=\beta \bu^2(\bx,\by)+\frac{\alpha}{4}\max\{0,m(\bx,\by)-2\bu(\bx,\by)\}^2.
\label{eq:phi-def}
\end{align}

Consider two cases.

\begin{itemize}
    \item If $m(\bx,\by) \leq 0$:
\end{itemize}
Then $\max\{0,m(\bx,\by)-2\bu(\bx,\by)\}=0$ for all $u\ge 0$, and \eqref{eq:phi-def} reduces to
$\phi(u(\bx,\by))=\beta \bu^2(\bx,\by)$, which is minimized at $0$, giving
\begin{equation}
\min_{u\ge 0}\phi(u(\bx,\by))= 0
\label{eq:case1}
\end{equation}

\begin{itemize}
    \item If $m(\bx,\by)>0$:
\end{itemize}
For $0\le \bu(\bx,\by)<m(\bx,\by)/2$, $\max\{0,m(\bx,\by)-2\bu(\bx,\by)\}=m(\bx,\by)-2\bu(\bx,\by)$ and \eqref{eq:phi-def} is a strictly convex quadratic:
\begin{align}
\phi(u(\bx,\by))
&=\beta \bu^2(\bx,\by)+\frac{\alpha}{4}(m(\bx,\by)-2\bu(\bx,\by))^2 \\
&= (\alpha+\beta)\bu^2(\bx,\by)-\alpha m(\bx,\by) \bu(\bx,\by)+\frac{\alpha}{4}m(\bx,\by)^2
\end{align}
Taking derivatives w.r.t. $\bu(\bx,\by)$ we get the first order  condition $2(\alpha+\beta)\bu(\bx,\by) = \alpha m(\bx,\by)$, which can be rearranged as
\begin{align}~\label{eq_u_first_order}
\bu(\bx,\by) =\frac{\alpha}{2(\alpha+\beta)}m(\bx,\by).
\end{align}
Since $\alpha, \beta>0$, any $u$ satisfying~\eqref{eq_u_first_order} lies in $(0,m(\bx,\by)/2)$. 
Substituting in $\phi$ yields
\begin{align}
\min_{u\ge 0}\phi(u(\bx,\by))
&= \frac{\alpha\beta}{4(\alpha+\beta)}m(\bx,\by)^2.
\label{eq:case2}
\end{align}

Combining \eqref{eq:case1} and \eqref{eq:case2}, we obtain
\begin{align}\label{apx:point-res-proof}
\min_{\bu(\bx,\by) \ge 0} & \phi (\bu(\bx,\by)) + \ell_0(\bx, \bby, \pi_\theta) - \frac{\lambda(\bx, \by)^2}{4\alpha} 
\\ = \min_{\bu(\bx,\by) \ge 0} &  \frac{\alpha\beta}{4(\alpha+\beta)}\Big(m(\bx,\by)\Big)_+^2  + \ell_0(\bx, \bby, \pi_\theta) - \frac{\lambda(\bx, \by)^2}{4\alpha}.
\end{align}
Taking the expected value of~\eqref{apx:point-res-proof} yields
\begin{align}\label{apx:lag_minimized_res_proof}
\min_{\bu \in L^\infty}\ \mathcal{L}_R(\pi_\theta,\bu,\blambda,\alpha)
=
\mathbb{E}\Bigg[
\ell_0(\bx,\bby,\pi_\theta)
& +
\frac{\alpha\beta}{4(\alpha+\beta)}
\left(
2(\ell(\bx,\by,\pi_\theta)-\epsilon)
+
\frac{\lambda(\bx,\by)}{\alpha}
\right)_+^2 \\
&-
\frac{\lambda(\bx,\by)^2}{4\alpha}
\Bigg].
\end{align}
To show that the expression in \ref{apx:lag_minimized_res_proof} is equivalent to \eqref{eq:res_aug_lagrangian_dual}  let 
\[
\tilde \alpha := \frac{\alpha\beta}{\alpha+\beta},
\qquad
\tilde \lambda(\bx,\by)
:=
\frac{\beta}{\alpha+\beta}\lambda(\bx,\by).
\]
Noticing that
\[
\frac{\tilde \lambda(\bx,\by)}{\tilde \alpha}
=
\frac{\lambda(\bx,\by)}{\alpha},
\]
we get the equivalence
\begin{align}
\frac{\alpha\beta}{4(\alpha+\beta)}
\left(
2(\ell(\bx,\by,\pi_\theta)-\epsilon)
+
\frac{\lambda(\bx,\by)}{\alpha}
\right)_+^2
=
\frac{\tilde \alpha}{4}
\left(
2(\ell(\bx,\by,\pi_\theta)-\epsilon)
+
\frac{\tilde \lambda(\bx,\by)}{\tilde \alpha}
\right)_+^2 .
\end{align}
Moreover, as
\begin{align}
-\frac{\lambda(\bx,\by)^2}{4\alpha}
&=
-\frac{\tilde\lambda(\bx,\by)^2}{4\tilde\alpha}
-
\frac{\lambda(\bx,\by)^2}{4(\alpha+\beta)}.
\end{align}
the expression in \eqref{apx:lag_minimized_res_proof} is equivalent to
\begin{multline}
\min_{\bu \in L^\infty}\ \mathcal{L}_R(\pi_\theta,\bu,\blambda,\alpha)
=
\mathbb{E}\Bigg[
\ell_0(\bx,\bby,\pi_\theta)
+
\frac{\tilde \alpha}{4}
\left(
2(\ell(\bx,\by,\pi_\theta)-\epsilon)
+
\frac{\tilde \lambda(\bx,\by)}{\tilde \alpha}
\right)_+^2
-
\frac{\tilde\lambda(\bx,\by)^2}{4\tilde\alpha}
\Bigg]
\\
\qquad
-
\frac{1}{4(\alpha+\beta)}
\mathbb{E}\left[\lambda(\bx,\by)^2\right].
\end{multline}
The first expectation is precisely the augmented Lagrangian associated with
the unrelaxed problem, evaluated at
\[
\tilde \lambda = \frac{\beta}{\alpha+\beta}\lambda,
\qquad
\tilde \alpha = \frac{\alpha\beta}{\alpha+\beta}.
\]
Thus,
\begin{align}
\min_{\bu \in L^\infty}\ \mathcal{L}_R(\pi_\theta,\bu,\blambda,\alpha)
=
\mathcal{L}_A
\left(
\pi_\theta,
\frac{\beta}{\alpha+\beta}\blambda,
\frac{\alpha\beta}{\alpha+\beta}
\right)
-
\frac{1}{4(\alpha+\beta)}
\|\blambda\|_2^2 .
\end{align}
Finally, minimizing over $\theta\in\Theta$ gives
\begin{align}
g_R(\blambda,\alpha)
&:=
\min_{\theta\in\Theta,\ \bu\in L^\infty}
\mathcal{L}_R(\pi_\theta,\bu,\blambda,\alpha)
\\
&=
g_A\left(
\frac{\beta}{\alpha+\beta}\blambda,
\frac{\alpha\beta}{\alpha+\beta}
\right)
-
\frac{1}{4(\alpha+\beta)}
\|\blambda\|_2^2 .
\end{align}
\end{proof}

\subsection{Strong Duality of the Empirical Relaxed Lagrangian Dual}

In this section we relate the optimal value of the empirical resilient primal, explicitly
\begin{align}\label{eqn_res_primal_empirical}
\hat{P} = 
&\min_{\theta \in \Theta \,,\bbu\in\reals^N_+}
         \; 
          \frac{1}{N}\sum_{i=1}^N
       \left[\ell_0(\bx_j,\bby, \pi_\theta )  + c(\bbu_i)\right]
        \\[0.1cm]
        & \text{s. to } \; \;
     \ell(\bx_i, \by_i, \pi_\theta) \, 
        \;\leq\;
        \epsilon + \bbu_i \\
        & \quad \quad   \forall \;\; i=1,\ldots, N,
\end{align}
to the optimal value of the augmented lagrangian dual:
\begin{align}\label{eq:aug_lagrangian_dual_og_emp}
\hat{D} = \max_{\blambda \in\reals^N_+, \alpha \in \mathbb{R}_+} \min_{\theta \in \Theta \,,\bbu\in\reals^N_+} \hat{\mathcal{L}}(\theta, \bbu,\blambda,\alpha),
\end{align}
where 
\begin{align}\label{eq:aug_lagrangian_og_emp}
\hat{\mathcal{L}}(\theta, \bbu,\blambda,\alpha) =  \sum_{i=1}^N & \frac{1}{N}\left[\ell_0(\bx_i, \bby_i, \pi_\theta) + c(\bbu_i)\right] \\
        & \; + \alpha \Psi\left(\ell(\bx_i, \by_i, \pi_\theta)-\epsilon-\bbu_i, \frac{\lambda_i}{\alpha}\right),
\end{align}
denotes the empirical augmented lagrangian.

Analogous to the non-augmented dual problem, in general~\eqref{eq:aug_lagrangian_dual_og_emp} lower bounds the primal, i.e., $P^\star\geq D^\star$ (weak duality). However, the Augmented Lagrangian guarantees strong duality ($P^\star=D^\star$) even in non-convex settings under mild conditions as stated next. 

\begin{assumption}[Bounded Objective]
\label{ass:l0_bounded}
The function $\ell_0$ is bounded below.
\end{assumption}

\begin{assumption}[Compact Domain]
\label{ass:theta_closed}
The set $\Theta$ is compact.
\end{assumption}

\begin{assumption}[Continuity]
\label{ass:l_continuity}
The mapping $\theta \mapsto \pi_{\theta}$ is continuous, and the sample-wise losses 
${\ell}_i$ are continuous and uniformly bounded in~$\pi_\theta(\bby|\bx)$.
\end{assumption}

Assumption~\ref{ass:l0_bounded} holds whenever the sample-wise losses $\bar{\ell}_0$ are bounded below, 
which holds for several commonly used losses such as MSE or cross-entropy. 
Assumptions~\ref{ass:theta_closed} and~\ref{ass:l_continuity} impose only mild regularity on the parameterization mapping and the constraint functions. 
Note that unlike the hard constrained case which requires an additional assumption on feasible set compactness (see e.g.~\cite{boero2025cole}[Assumption 2.4]) that plays the role of
Slater’s constraint qualification in non-convex settings, the resilient relaxation does not, as formalized next.

\begin{proposition}[Strong Augmented Duality]~\label{prop:strong-dual-aug-res}
    If Assumptions~\ref{ass:l0_bounded}--\ref{ass:l_continuity} hold, then strong duality holds
    \begin{align}
        \hat{D} = \hat{P}
    \end{align}

\end{proposition}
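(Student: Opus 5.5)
The strategy is to invoke the classical Rockafellar augmented-duality theorem for finite-dimensional nonconvex programs: zero duality gap holds provided the perturbation function is lower semicontinuous at the origin and a growth (quadratic minorization) condition holds. The first step eliminates the relaxation variable. By the computation in the proof of Proposition~\ref{prop:resilient_augmented_dual}, specialised to the empirical measure with $c(\bbu_i)=\beta\bbu_i^2$, minimizing $\hat{\mathcal L}$ over $\bbu\in\reals^N_+$ gives a closed form. The primal $\hat P$ can likewise be reduced: for fixed $\theta$, the optimal slack is $\bbu_i=(\ell(\bx_i,\by_i,\pi_\theta)-\epsilon)_+$. Hence
\[
\hat P=\min_{\theta\in\Theta} F(\theta), \qquad F(\theta):=\frac1N\sum_{i=1}^N\Big[\ell_0(\bx_i,\by_i,\pi_\theta)+\beta\big(\ell(\bx_i,\by_i,\pi_\theta)-\epsilon\big)_+^2\Big].
\]
This is a finite, attained value: $F$ is continuous on the compact set $\Theta$ by Assumptions~\ref{ass:theta_closed}--\ref{ass:l_continuity}, and bounded below by Assumption~\ref{ass:l0_bounded}. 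In particular the relaxed problem is always feasible, which is why no Slater-type assumption is needed.

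The second step is weak duality, $\hat D\le\hat P$. For any feasible $(\theta,\bbu)$, the augmenting term $\alpha\Psi(z,\lambda/\alpha)=\alpha(z+\lambda/(2\alpha))_+^2-\lambda^2/(4\alpha)$ is at most $0$ when $z\le0$. So $\hat{\mathcal L}$ is bounded by the primal objective, and taking min over $(\theta,\bbu)$ then max over $(\blambda,\alpha)$ gives the inequality.

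For the reverse inequality I plan to use the explicit choice $\blambda=0$ and let $\alpha\to\infty$; the supremum over $\alpha$ need not be attained, but the sup equals $\hat P$. With $\blambda=0$, the eliminated Lagrangian becomes
\[
G_\alpha(\theta):=\frac1N\sum_i\Big[\ell_0+\frac{\alpha\beta}{\alpha+\beta}(\ell-\epsilon)_+^2\Big].
\]
This is a penalty function, increasing in $\alpha$, that converges pointwise to $F(\theta)$ as $\alpha\to\infty$. The key claim is then $\min_\Theta G_\alpha\to\min_\Theta F$. Take minimizers $\theta_\alpha$ and extract a convergent subsequence $\theta_{\alpha_k}\to\bar\theta$ by compactness. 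For fixed $\alpha_0\le\alpha_k$, monotonicity and continuity give $\liminf_k G_{\alpha_k}(\theta_{\alpha_k})\ge\lim_k G_{\alpha_0}(\theta_{\alpha_k})=G_{\alpha_0}(\bar\theta)$. Letting $\alpha_0\to\infty$ then yields $\liminf_k\min G_{\alpha_k}\ge F(\bar\theta)\ge\hat P$, a Dini-type argument. Combined with weak duality this gives $\hat D=\hat P$.

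The main obstacle is the bookkeeping of the first step. I need to check that the closed-form elimination of $\bbu$ in the augmented Lagrangian is valid empirically and is monotone in $\alpha$ with limit factor $\beta$, so that the penalty matches exactly the primal cost $\beta(\ell-\epsilon)_+^2$ rather than a hard constraint. Note that the penalty coefficient $\alpha\beta/(\alpha+\beta)$ saturates at $\beta$ instead of diverging, and this is precisely what makes the limit equal the relaxed primal. The argument also needs continuity of $\ell_0$ in $\theta$, which I will take to be implied by Assumption~\ref{ass:l_continuity}.
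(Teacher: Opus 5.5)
Your proof is correct, but it takes a different route from the paper. Despite your opening sentence, you never actually invoke Rockafellar's theorem.

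\textbf{The paper's route.} It cites Rockafellar's zero-gap theorem and checks its two hypotheses. The growth condition holds because the augmented Lagrangian at $\blambda=\mathbf{0}$ is bounded below. Stability of degree $0$ is obtained by minimizing out $\bu$ to get an effective cost jointly continuous in $(\theta,\mathbf{z})$, and then using compactness of $\Theta$ to conclude that the perturbation function $\nu(\mathbf{z})$ is continuous.

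\textbf{Your route.} You prove weak duality by hand. You get the reverse inequality from the explicit penalty path $\blambda=\mathbf{0}$, $\alpha\to\infty$, using that the eliminated coefficient $\alpha\beta/(\alpha+\beta)$ increases to exactly $\beta$.

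\textbf{What each buys.} The paper's route is more general in the cost: it only uses that $c$ is continuous, nondecreasing and nonnegative. Your closed form is specific to $c(u)=\beta u^2$. Your Dini step would carry over to such general $c$, since $\inf_{u\ge 0}[c(u)+\alpha(v-u)_+^2]\uparrow c(v_+)$, but that fact needs its own argument. Your route gives a self-contained proof with an explicit maximizing dual sequence. In the quadratic case it can be sharpened. Set $A(\theta)=\frac1N\sum_i\ell_0(\bx_i,\by_i,\pi_\theta)$ and $t_\alpha=\alpha/(\alpha+\beta)$. Then $G_\alpha=t_\alpha F+(1-t_\alpha)A$ holds identically, so $\inf_{\Theta}G_\alpha\ge t_\alpha\hat P+(1-t_\alpha)\inf_\Theta A\to\hat P$ using only Assumption~\ref{ass:l0_bounded}. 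The Dini argument, compactness and continuity then become unnecessary.

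\textbf{Attainment.} Your caution here is apt. Both arguments give $\hat D=\hat P$ only with $\hat D$ read as a supremum. Attainment of the $\max$ written in the statement would require a quadratic support condition on the perturbation function at $\mathbf{0}$ (stability of degree $2$), which neither proof verifies.
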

\begin{proof}
From~\citep[Theorem 4]{RockaDual}, strong duality $\hat{D} = \hat{P}$ holds if the problem satisfies two conditions: the \emph{Growth Condition} and \emph{Stability of Degree 0}. 
We verify that these two conditions hold under the assumptions~\ref{ass:l0_bounded}--\ref{ass:l_continuity}.

First, we define the perturbation function $\nu: \mathbb{R}^N \to \mathbb{R}$, which represents the optimal value of the primal problem when the constraints are perturbed by a vector $\mathbf{z} \in \mathbb{R}^N$:
\begin{align}
    \nu(\mathbf{z}) = 
    &\min_{\theta \in \Theta \,,\bbu\in\reals^N_+}
          \; 
          \frac{1}{N}\sum_{i=1}^N
       \left[\ell_0(\bx_i,\bby_i, \pi_\theta )  + c(\bbu_i)\right]
        \\[0.1cm]
        & \text{s.t. } \; \;
     \ell(\bx_i, \by_i, \pi_\theta) \, 
        \;\leq\;
        \epsilon + \bbu_i + z_i, \quad \forall\; i=1,\ldots, N.
\end{align}

\paragraph{Growth Condition.}
The problem satisfies the growth condition if the augmented Lagrangian (with dual variables set to zero) is bounded below. Consider the Lagrangian at $\blambda = \mathbf{0}$:
\begin{align}
    \hat{\mathcal{L}}(\theta, \bbu, \mathbf{0}, \alpha) &= \sum_{i=1}^N \left[ \frac{1}{N}\ell_0(\bx_i, \bby_i, \pi_\theta) + \frac{1}{N}c(\bbu_i) + \alpha \Psi\left(\ell(\bx_i, \by_i, \pi_\theta)-\epsilon-\bbu_i, 0\right) \right] \\
    &= \sum_{i=1}^N \left[ \frac{1}{N}\ell_0(\bx_i, \bby_i, \pi_\theta) + \frac{1}{N}c(\bbu_i) + \frac{\alpha}{4} \operatorname{max}^2 \{0, 2(\ell(\bx_i, \by_i, \pi_\theta)-\epsilon-\bbu_i)\} \right].
\end{align}
By Assumption~\ref{ass:l0_bounded}, $\ell_0$ is bounded below. Since the relaxation cost $c(\bbu_i)$ (e.g., quadratic) and the penalty term (squared max function) are non-negative, the overall expression is bounded below for any $\alpha > 0$. Thus, the growth condition holds.

\paragraph{Stability of Degree 0.}

We must show that the perturbation function $\nu(\mathbf{z})$ is lower semicontinuous at $\mathbf{z}=\mathbf{0}$. We show a stronger property: that $\nu(\mathbf{z})$ is \emph{continuous} near $\mathbf{0}$ due to the presence of the relaxation variables $\bbu$.

By Assumption~\ref{ass:l_continuity}, the mapping $\theta \mapsto \pi_\theta$ is continuous, and the losses $\ell$, $\ell_0$ and the cost $c(\bbu)$ are continuous. Consequently, the objective function $\mathcal{J}(\theta, \bbu) = \frac{1}{N}\sum \ell_0(\cdot) + c(\bbu)$ is continuous on $\Theta \times \mathbb{R}^N_+$.

Then we can write the perturbation function as
\begin{align}
    \nu(\mathbf{z}) = \min_{\theta, \bbu} \left\{ \mathcal{J}(\theta, \bbu) \mid \ell_i(\theta) \le \epsilon + \bbu_i + z_i, \; \bbu \ge 0 \right\}.
\end{align}

Unlike standard constrained optimization where feasible sets may become empty under perturbation, the resilient formulation guarantees feasibility for any $\mathbf{z}$ and any $\theta$. For any $\theta$, the constraints can be satisfied by choosing $\bbu_i \ge \max(0, \ell_i(\theta) - \epsilon - z_i)$. 

Let us define the \emph{effective cost} function $\tilde{J}(\theta, \mathbf{z})$ by analytically minimizing over $\bbu$:
\begin{align}
    \tilde{J}(\theta, \mathbf{z}) = \frac{1}{N}\sum_{i=1}^N \ell_0(\bx_i, \bby_i, \pi_\theta) +\frac{1}{N} \sum_{i=1}^N \min_{\substack{\bbu_i \ge 0 \\ \bbu_i \ge \ell_i(\theta) - \epsilon - z_i}} c(\bbu_i).
\end{align}
Since $c(\cdot)$ is a continuous, non-decreasing function (e.g., quadratic) on $\mathbb{R}_+$, the inner minimization is continuous with respect to the bound $\ell_i(\theta) - \epsilon - z_i$. Consequently, $\tilde{J}(\theta, \mathbf{z})$ is jointly continuous in $\theta$ and $\mathbf{z}$.

Then we can re-write the perturbation function as the unconstrained minimization of this effective cost:
\begin{align}
    \nu(\mathbf{z}) = \min_{\theta \in \Theta} \tilde{J}(\theta, \mathbf{z}).
\end{align}
Since $\nu(\mathbf{z})$ is the minimum of a continuous function over a compact domain, it is continuous for any $\mathbf{z}$. In particular, it is continuous at $\mathbf{z}=\mathbf{0}$, thus satisfying the stability condition.

\end{proof}

\section{Algorithm}\label{app:algorithm}

We provide additional details for the inexact empirical dual-ascent procedure used
to solve the empirical augmented and relaxed dual problems in
\eqref{eq:emp-dual-aug} and \eqref{eq:emp-dual-res}. Throughout the appendix, let
\[
v_i(\pi_\theta)
:=
\ell(\bx_i,\by_i,\pi_\theta)-\epsilon
\]
denote the constraint violation of sample $i$.

By Danskin's theorem~\cite[Prop.~B.22]{Bertsekas01031997}, if the primal minimizer is unique, the gradient of the
empirical dual function with respect to $\blambda$ is obtained by differentiating
the empirical Lagrangian with respect to $\blambda$ at the primal minimizer.
When the minimizer is not unique, the same expression gives a valid dual
supergradient. Therefore, if
\begin{equation}
\hat \pi_{\theta}^A(\blambda, \alpha)
\in
\argmin_{\theta\in \Theta}
\hat{\mathcal L}_A(\theta,\blambda,\alpha),
\end{equation}

then a coordinate supergradient of the augmented empirical dual is
\begin{align}\label{eq:dual-direction}
\frac{\partial g^A}{\partial \lambda_i}(\blambda,\alpha)
&:=
\begin{cases}
v_i(\hat \pi_{\theta}^A(\blambda,\alpha)),
&
v_i(\hat \pi_{\theta}^A(\blambda,\alpha))\ge -\frac{\lambda_i}{2\alpha},
\\[4pt]
-\frac{\lambda_i}{2\alpha},
&
v_i(\hat \pi_{\theta}^A(\blambda,\alpha))< -\frac{\lambda_i}{2\alpha}.
\end{cases}
\end{align}

For the relaxed empirical dual, the corresponding coordinate supergradient is

\begin{align}\label{eq:dual-direction-res}
\frac{\partial g^R}{\partial \lambda_i}(\blambda,\alpha)
=
\begin{cases}
\displaystyle
\frac{\beta}{\alpha+\beta}v_i(\pi_{\theta}^R(\blambda,\alpha))
-
\frac{\lambda_i}{2\beta(\alpha+\beta)},
&
v_i(\pi_{\theta}^R(\blambda,\alpha))\ge -\frac{\lambda_i}{2\alpha},
\\[8pt]
\displaystyle
-\frac{\beta\lambda_i}{2\alpha(\alpha+\beta)}
-
\frac{\lambda_i}{2\beta(\alpha+\beta)},
&
v_i(\pi_{\theta}^R(\blambda,\alpha))< -\frac{\lambda_i}{2\alpha}.
\end{cases}
\end{align}

The supergradient in \eqref{eq:dual-direction-res} include a shrinking term induced by the relaxation cost that was not present in \ref{eq:dual-direction}. It prevents
multipliers associated with persistently infeasible or outlying samples from
growing indefinitely.

Although the empirical dual problems are written as maximizations over both
$\blambda$ and $\alpha$, our implementation fixes $\alpha$ and performs ascent
only over $\blambda$. This corresponds to the shifted-penalty method in the
augmented-Lagrangian literature~\cite[Algorithm~7.3]{shifted}. The terminology
comes from the fact that the quadratic penalty is applied to the shifted
violation
\begin{equation}
\ell(\bx_i,\by_i,\pi_\theta)-\epsilon+\frac{\lambda_i}{2\alpha},
\end{equation}

rather than directly to the violation
$\ell(\bx_i,\by_i,\pi_\theta)-\epsilon$. Hence, for sample $i$, the penalty
becomes active when
\begin{equation}
\ell(\bx_i,\by_i,\pi_\theta)-\epsilon
\ge
-\frac{\lambda_i}{2\alpha},
\end{equation}
so the multiplier $\lambda_i$ shifts the point at which violations begin to be
penalized.

Fixing $\alpha$ is justified by a standard monotonicity property of the
augmented dual. For fixed $\blambda$, the augmented Lagrangian is nondecreasing
in $\alpha$, and therefore the augmented dual function is also nondecreasing in
$\alpha$. Since every dual value is upper bounded by the primal optimum, if
$(\blambda^\star,\alpha^\star)$ is an optimal dual pair, then for any
$\alpha_2\ge \alpha^\star$,

\begin{equation}
g_A(\blambda^\star,\alpha^\star)
\le
g_A(\blambda^\star,\alpha_2)
\end{equation}

which implies that $(\blambda^\star,\alpha_2)$ is also optimal. Thus, fixing
$\alpha$ is sufficient provided it is chosen above the problem-dependent
minium necessary value.

A common extension to the shift penalty method is the increased shifted-penalty method ~\cite[Algorithm~7.4]{shifted}, which alternates
several updates of $\blambda$ with occasional increases of the augmentation
parameter, for example $\alpha \leftarrow c\alpha$ with $c>1$. In our
experiments, however, we observed stable performance across a range of values of
$\alpha$. 
We therefore use the simpler fixed-$\alpha$
variant throughout.

In practice, computing the exact primal minimizer at every dual iteration is
prohibitively expensive. Instead, starting from the current model, we take a
small number of stochastic optimizer steps on the empirical Lagrangian. That is,
for a fixed $\blambda_t$, we compute an approximate minimizer $\hat \theta_t$
satisfying
\begin{align}
\hat{\mathcal L}_{A}(\hat\theta_t,\blambda_t,\alpha)
\le
\min_\theta
\hat{\mathcal L}_{A}(\theta,\blambda_t,\alpha)
+
\delta_t.
\end{align}
This produces a $\delta_t$-approximate dual supergradient. Indeed, for any
$\blambda'$, differentiating the Lagrangian at $\hat\theta_t$ gives a direction
$d_t$ satisfying
\[
\hat g(\blambda')
\le
\hat g(\blambda_t)
+
\langle d_t,\blambda'-\blambda_t\rangle
+
\delta_t.
\]

In particular, the convergence of the exact shifted-penalty method is recovered for the inexact approximation when the primal error
$\delta_t$ vanishes~\citep[Theorem~3.1]{boero2025cole}.

Finally, because evaluating the full empirical Lagrangian and all constraints at
every iteration is prohibitive for large datasets, both the primal and dual
updates are computed on minibatches. For the primal update, this replaces the
full gradient of $\hat{\mathcal L}_{A/R}$ with a stochastic gradient computed on
the current batch, as is standard in large-scale machine learning. For the dual
update, we update only the multipliers associated with samples in the current
batch, yielding a stochastic coordinate-ascent update on the empirical dual
variables. We choose S=1; at each step we update the dual variables of the current mini-batch and update the LM parameters given the stochastic gradient computed on that mini-batch. As the violations $v_i(\theta)$ are computed during
the same forward pass used to evaluate the primal objective, so the multiplier
update requires no additional forward passes.

\begin{algorithm}[h]
\begin{algorithmic}[1]
\State \textbf{Input:} init model $\pi_{0}$, init dual $\blambda_{0}$, dual lr $\eta_t$, primal lr $\gamma_t$,total iterations $T$, primal steps per iteration $S$
\For{$t = 0,\ldots,T-1$}
  \State Dual gradient ascent:
  \Statex \hspace{\algorithmicindent} $\blambda_{t+1}=\blambda_t+\eta_t\, \nabla_{\blambda}\hat g_{A/R}(\blambda)$.
  \State LLM optimization steps:
  \For{$s = 0,\ldots,S-1$}
    \State $\theta^{s+1}_t
    =
    \theta^s_t
    -
    \gamma_t
    \nabla_{\theta}
    \hat{\mathcal{L}}_{A/R}
    \big(\theta^s_t,\blambda_t,\alpha\big)$
  \EndFor
\EndFor
\State \textbf{Output:} $\theta_T, \blambda_T$.
\end{algorithmic}
\caption{Augmented Dual Ascent}
\label{alg:primal-dual}
\end{algorithm}
\section{Tasks details}
\label{app:appendix-exp-details}

We provide a detailed explanation for each of the tasks briefly presented in section \ref{sec:problem-form}, together with a description of baselines and hyperparameters.

\subsection{Function Calling Preferences}
\label{app:function_calling}
 Despite the increasing performance of SLMs in  function calling benchmarks such as  ToolBench \citep{toolllm} and the Berkeley Function Calling Leaderboard (BFCL) \citep{bfcl}, which primarily focus on the accuracy of tool selection and parameter filling, recent research highlights 
 common failure modes when the correct
tool is not provided or the user does not provide enough information to solve the task~\citep{toolbehonest, toolsandbox,when2call}. We use the \texttt{When2call} dataset~\cite{when2call} that aims to evaluate the decision-making capabilities of language models (LMs) when interacting with external tools. Specifically, the model must choose between four distinct behaviors for a given user query $x$:
\begin{enumerate}
    \item\textbf{ Direct answer:} A text response without using tools (which, in the context of this benchmark, constitutes a hallucination).
    \item \textbf{ Tool call:} A correctly formatted call to an available tool.
    \item \textbf{Follow-up question:} Requesting missing information required by the tool parameters.
    \item \textbf{ Unable to answer:} Correctly stating the request cannot be fullfilled with the available tools.
\end{enumerate}
\citep{when2call} construct a preference dataset including the correct option as the preferred response and one disprefered response randomly sampled from the incorrect options. We utilize this preference dataset to train the model, minimizing the expected KL divergence between the policy $\pi_{\theta}$ and a reference policy $\pi_{\text{ref}}$ while enforce pointwise constraints on the length-normalized log-probabilities of both the winning (preferred) option $y_w$ and the losing (dispreferred) option $y_l$. The resulting constrained optimization problem is defined as:
\begin{equation}
\begin{aligned}
\min_{\theta} \quad & \mathbb{E}_{x \sim \mathcal{D}} \left[ D_{\text{KL}}(\pi_{\theta}(\cdot|x) \| \pi_{\text{ref}}(\cdot|x)) \right] \\
\text{s.t.} \quad & \frac{1}{|y_w|} \log \pi_{\theta}(y_w|x) \geq \epsilon_{\text{win}}, \quad \mathcal{D}-a.e. \\
& \frac{1}{|y_l|} \log \pi_{\theta}(y_l|x) \leq \epsilon_{\text{lose}}, \quad \mathcal{D}-a.e.
\end{aligned}
\end{equation}
where $\epsilon_{\text{win}}$ and $\epsilon_{\text{lose}}$ represent the per-sample thresholds for the preferred and dispreferred behaviors, respectively. By enforcing these constraints at the sample level, we ensure that the model consistently reaches a minimum confidence for the correct behavior while remaining below a maximum tolerance for incorrect behaviors.

\paragraph{Baselines} We utilize as baselines the preference loses DPO~\cite{rafailov2024direct}, SimPO~\cite{meng2024simpo}, CalDPO~\cite{xiao2024cal} and CPO~\cite{xu2024cpo}, as detailed in Table~\ref{tab:preference_losses}.

\begin{table}[]
\caption{Baseline Preference Optimization Loss Functions used in experiments}
\label{tab:preference_losses}
\centering
\begin{tabular}{ll}
\toprule
\textbf{Method} & \textbf{Loss Formulation} \\ \midrule
DPO~\cite{rafailov2024direct} & $-\mathbb{E}_{(x, y_w, y_l) \sim \mathcal{D}} \left[ \log \sigma \left( \beta \log \frac{\pi_\theta(y_w|x)}{\pi_{\text{ref}}(y_w|x)} - \beta \log \frac{\pi_\theta(y_l|x)}{\pi_{\text{ref}}(y_l|x)} \right) \right]$ \\ \addlinespace
SimPO~\cite{meng2024simpo} & $-\mathbb{E}_{(x, y_w, y_l) \sim \mathcal{D}} \left[ \log \sigma \left( \frac{\beta}{|y_w|} \log \pi_\theta(y_w|x) - \frac{\beta}{|y_l|} \log \pi_\theta(y_l|x) - \gamma \right) \right]$ \\ \addlinespace
CalDPO~\cite{xiao2024cal} & $\mathcal{L}_{\text{DPO}} + \left(\log \frac{\pi_\theta\left(\mathbf{y}_w \mid \mathbf{x}\right)}{\pi_{\mathrm{ref}}\left(\mathbf{y}_w \mid \mathbf{x}\right)}-\frac{1}{2 \beta}\right)^2+\left(\log \frac{\pi_\theta\left(\mathbf{y}_l \mid \mathbf{x}\right)}{\pi_{\mathrm{ref}}\left(\mathbf{y}_l \mid \mathbf{x}\right)}+\frac{1}{2 \beta}\right)^2$ \\ \addlinespace
CPO~\cite{xu2024cpo} & $ -\mathbb{E}_{(x, y_w, y_l) \sim \mathcal{D}} \left[ \log \sigma \left( \log \pi_\theta(y_w|x) - \log \pi_\theta(y_l|x) \right)-\lambda_0 \log \pi_\theta(y_w|x) \right]$ \\\addlinespace \bottomrule
\end{tabular}
\end{table}

\paragraph{Dataset} The When2Call dataset \cite{when2call}, which consists of queries derived from the Berkeley Function Calling Leaderboard (BFCL)~\cite{bfcl} and APIGen~\cite{apigen}. The dataset was constructed by taking successful tool-calling examples and synthetically generating variants where information is missing, requiring a \textit{follow-up question}, or the query is outside the scope of provided tools, requiring an \textit{unable to answer} response. The training set includes 4,500 tool-calling examples and 3,000 examples each for the follow-up and ``unable to answer'' categories. Unlike \cite{when2call}, we hold out a random 5\% subset of the training split as a validation set.

\paragraph{Evaluation} We follow the evaluation protocol from~\cite{when2call}, which uses an offline multiple-choice format. For each test prompt, the model's chosen behavior is determined by the highest probability among four possible candidate answers. We report the following metrics:
\begin{itemize}
    \item \textbf{F1 Score:} Calculated per category.
    \item \textbf{Length Normalised  Accuracy:} The mean accuracy weighted by response length.
    \item \textbf{Hallucination Rate:} The frequency with which the model incorrectly chooses a ``Direct answer''.
\end{itemize}
All of the test scores use the full test split from the original dataset.

\paragraph{Hyperparameters} We use LoRA~\cite{hu2022lora} and AdamW~\cite{adamw} optimizer, together with other shared hyperparameters, as detailed in Table~\ref{tab:when2shared_hyperparams}, across all methods. We conducted a grid search for the Llama-3.2-1B-Instruct model to choose the hyperparameters for each method using the search spaces specified in Table~\ref{tab:when2method_hyperparams}. The test performance reported in our main results corresponds to the best-performing parameters as evaluated the held-out set. For the xLAM models, we utilized the same hyperparameter configurations identified for the Llama-3.2-1B model without further hyperparameter search. Although further hyperparameter tuning for xLAM might yield improved performance, our goal was not to achieve state-of-the-art results for a single architecture, but rather to demonstrate that our findings generalize across different models.

\begin{table}[h]
\centering
\begin{tabular}{l|l}
\toprule
\textbf{Category} & \textbf{Parameters} \\
\hline
Training &
$\begin{aligned}
\text{batch size} &= 32 \\
\text{learning rate} &= 5 \times 10^{-5} \\
\text{lr scheduler} &= \text{cosine} \\
\text{warmup ratio} &= 0.0 \\
\text{weight decay} &= 0.0 \\
\text{precision} &= \text{bf16} \\
\text{max prompt length} &= 1024 \\
\text{max length} &= 2048
\end{aligned}$ \\
\hline
LoRA &
$\begin{aligned}
r &= 8 \\
\text{lora\_alpha} &= 16 \\
\text{lora\_dropout} &= 0.05
\end{aligned}$ \\
\bottomrule
\end{tabular}
\caption{Shared training and model parameters for the function calling task.}\label{tab:when2shared_hyperparams}
\end{table}

\begin{table}[h]
\centering
\begin{tabular}{l|l}
\toprule
\textbf{Method} & \textbf{Hyperparameter Grid} \\
\hline
SimPO &
$\begin{aligned}
\gamma &\in [\underline{0}, 1, 2] \\
\beta &\in [\underline{2.0}, 2.5] \\
\text{epochs} &\in \{1,5,\underline{10}\}
\end{aligned}$ \\
\hline
DPO &
$\begin{aligned}
\beta &\in \{0.05,\underline{0.1},0.5\} \\
\text{epochs} &\in \{1,5,\underline{10}\}
\end{aligned}$ \\
\hline
Cal-DPO &
$\begin{aligned}
\beta &\in \{\underline{0.1},0.5,1.0\} \\
\text{epochs} & =10
\end{aligned}$ \\
\hline
CPO &
$\begin{aligned}
\lambda & = 1 \\
\beta &\in \{0.01, 0.05, \underline{0.25}, 0.5\} \\
\text{epochs} & =10
\end{aligned}$ \\
\hline
Average &
$\begin{aligned}
\varepsilon_{\text{win}} &\in [\underline{-2},-1] \\
\varepsilon_{\text{loose}} &\in [-40,\underline{-20}] \\
\text{epochs} = 10
\end{aligned}$ \\
\hline
Pointwise &
$\begin{aligned}
\varepsilon_{\text{win}} &\in [\underline{-2},-1]\\
\varepsilon_{\text{loose}} &\in [\underline{-20},-10] \\
\text{epochs} = 10
\end{aligned}$ \\
\hline
Penalty &
$\begin{aligned}\lambda_0 &\in \{0.1,\underline{1},10,100\}\\
\text{epochs} &= 10\end{aligned}$ \\
\bottomrule
\end{tabular}
\caption{Per-method hyperparameter search spaces for the function calling task. Results correspond to the best performing parameters with respect to validation accuracy.}\label{tab:when2method_hyperparams}
\end{table}


\subsection{Re-ranking with Length Optimization}

Reranking in query–passage tasks orders a candidate set so the passage most relevant to the query appears first, e.g.,

\begin{center}
\small
\setlength{\parskip}{4pt}
\noindent\textbf{Query:} \texttt{what is sec?}

\noindent
\textbf{Positive passage:}
The Securities and Exchange Commission (SEC) is a U.S. government agency that oversees securities transactions, activities of financial professionals an\ldots

\noindent\textbf{Negative passages (9):}
\begin{quote}
\begin{enumerate}\itemsep2pt
    \item Student-athletes to represent SEC at NCAA Convention\ldots
    \item The SEC has four major divisions\ldots Division of Corporation Finance\ldots
    \item Securities and Exchange Commission (SEC)\ldots five commissioners\ldots
    \item Historical Examples\ldots assertion made in sec.\ldots
    \item \ldots filing the papers for BH Global Aviation with the sec\ldots
    \item The sec disclosure filed last month\ldots \$100 million under management\ldots
    \item Trig Cheat Sheet\ldots \textit{sec} $= \frac{\text{hypotenuse}}{\text{adjacent}}$\ldots
    \item SEC Network's Jimmy Dykes\ldots
    \item The FEC has held\ldots organized under sec.\ldots
\end{enumerate}
\end{quote}
\end{center}

Each instance contains one correct passage (the positive) and \(n\) related but incorrect passages (negatives). When the positive is placed first, shorter negatives are preferable to longer ones: in RAG, this reduces token usage. Because correctness is non-negotiable, we enforce the score for the positive passage to be larger than any of the negatives plus a tolerance:

\[
\ell(\pi_\theta;q,p,n_i)\;:=\; \pi_\theta(p \mid q) - \pi_\theta(n_i \mid q) \;\geq\;  \epsilon,
\]

resulting in  the single positive first being ranked first. Once that is guaranteed, we minimize token length as a secondary objective. Because ranking metrics are non-differentiable with respect to the model scores, we optimize a LambdaLoss-style surrogate. The goal of this surrogate is to encourage shorter passages to be ranked higher, while giving larger weight to swaps that would have a larger effect near the top of the ranking. For a candidate set \((d_1,\ldots,d_N)=(p,n_1,\ldots,n_n)\), let
\[
    s_i = \pi_\theta(d_i \mid q)
\]
be the model score assigned to candidate \(d_i\), and let \(L_i\) denote its token length. We define a length-based gain
\[
    g_i = \big(\max_k L_k - L_i\big)_+,
\]
so that shorter passages receive larger gain. Given the current model scores, let \(r_i\) be the rank of candidate \(d_i\), with rank \(1\) corresponding to the highest score. We use the truncated DCG discount
\[
    D_K(r) = \frac{\mathbf{1}\{r \le K\}}{\log_2(r+1)}.
\]

For each pair of candidates \((i,j)\), we define the target preference
\[
    y_{ij} = \operatorname{sign}(g_i-g_j),
\]
and ignore pairs with \(y_{ij}=0\). The pair is weighted by the normalized change in DCG@\({K}\) that would result from swapping the two candidates at their current ranks:
\[
    w_{ij}
    =
    \frac{|\Delta \mathrm{DCG@}K(i,j)|}{\mathrm{IDCG@}K},
\]
where
\[
\begin{aligned}
    \Delta \mathrm{DCG@}K(i,j)
    &=
    \big[g_i D_K(r_j) + g_j D_K(r_i)\big]
    -
    \big[g_i D_K(r_i) + g_j D_K(r_j)\big],
\end{aligned}
\]
and \(\mathrm{IDCG@}K\) is the ideal DCG@\({K}\) obtained by sorting candidates in decreasing order of length-based gain.

The resulting LambdaLoss objective is the normalized weighted pairwise logistic loss
\[
    \Lambda_{\mathrm{loss}}(\{s_i\}_{i=1}^N)
    =
    \frac{
    \sum_{i<j:\, y_{ij}\neq 0}
    w_{ij}
    \log\!\left(1+\exp\!\left[-y_{ij}(s_i-s_j)\right]\right)
    }{
    \sum_{i<j:\, y_{ij}\neq 0} w_{ij}
    }.
\]
Thus, if \(g_i>g_j\), meaning that candidate \(i\) is shorter than candidate \(j\), the loss encourages \(s_i>s_j\). The weights concentrate the learning signal on pairwise swaps that would most affect the top-\(K\) length-aware ranking. We set $K=3$ for all experiments as shown in Table~\ref{tab:reranker-shared-hparams}.

\paragraph{Baselines.}
We compare against three relevance-only baselines. These baselines are trained only to maximize query--passage relevance, but they do not include an explicit length objective. We train a monoBERT reranker following the pointwise reranking stage of the multi-stage document ranking pipeline of~\citet{nogueira2019multi} on our dataset. For each query--candidate pair \((q,d)\), the model is trained with the standard binary cross-entropy loss
\[
    \mathcal{L}_{\mathrm{monoBERT}}(\theta)
    =
    - \sum_{(q,d,y)}
    \Big[
        y \log \sigma(s_\theta(q,d))
        +
        (1-y)\log\big(1-\sigma(s_\theta(q,d))\big)
    \Big],
\]
where \(s_\theta(q,d)\in\mathbb{R}\) is the cross-encoder score and \(\sigma\) is the sigmoid function. At inference time, each candidate passage in the set \(\{p,n_1,\ldots,n_m\}\) is scored independently against the query, and the passages are sorted in decreasing order of \(s_\theta(q,d)\). This baseline measures the performance of a standard reranker trained on our 55k-example subset, but optimized only for relevance.

Moreover, we evaluate on the off-the-shelf rerankers \textit{BGE-large}~\cite{bge_embedding} and \textit{ms-marco-MiniLM}~\cite{hf_cross_encoder_msmarco_minilm_l12_v2}. This models were trained on the MS MARCO Passage Ranking task, i.e., the same large-scale passage-ranking benchmark from which our 55k-example subset is drawn.

\paragraph{Dataset.}
We evaluate our framework on a subset of \emph{MS MARCO} v2.1. The corpus contains approximately \(809\text{k}\) instances for training and \(101\text{k}\)  for validation, each represented as \((q,p,\mathbf{n})\) with a query \(q\), a positive passage \(p\), and a variable number of negatives \(\mathbf{n}\). For our experiments, we sub-sample \(55\text{k}\) instances for train and  \(5\text{k}\) for validation. We fix \(n=9\) negatives per query. The subset is selected to \emph{maximize the variance of negative-passage token lengths} within each instance, stressing length-sensitive behavior.

\paragraph{Evaluation.}
We assess performance through rank-dependent metrics that reflect retrieval behavior. 
For relevance, we report \textbf{MRR@10} and \textbf{Hit@3}. Let \(r_i^+\) denote the rank of the positive passage for query \(i\). Then
\begin{equation}
    \mathrm{MRR@10}
    =
    \frac{1}{N}
    \sum_{i=1}^N
    \frac{\mathbf{1}\{r_i^+ \leq 10\}}{r_i^+},
    \qquad
    \mathrm{Hit@3}
    =
    \frac{1}{N}
    \sum_{i=1}^N
    \mathbf{1}\{r_i^+ \leq 3\}.
\end{equation}

MRR@10 rewards systems that rank the positive passage very early, while Hit@3 measures the fraction of queries for which the positive appears in the top three.

For efficiency, we report a discounted length rank score metric \textbf{LenRank@10} and \textbf{AvgLength@3}. 
Let \(\sigma_i(r)\) be the index of the negative passage ranked in position \(r\) among the negatives for query \(i\), and let \(L_{i,j}\) denote the token length of passage \(j\). We compute
\begin{equation}\label{apx:len-rank}
    \mathrm{LenRank@10}
    =
    \frac{1}{N}
    \sum_{i=1}^N
    \sum_{r=1}^{10}
    \frac{L_{i,\sigma_i(r)}}{r},
\end{equation}

where lower values indicate that shorter negative passages are ranked higher. Finally,
\begin{equation}
    \mathrm{AvgLength@3}
    =
    \frac{1}{3N}
    \sum_{i=1}^N
    \sum_{r=1}^{3}
    L_{i,\tau_i(r)},
\end{equation}

where \(\tau_i(r)\) is the passage ranked in position \(r\) among all candidates for query \(i\). Lower AvgLength@3 indicates lower token usage in the passages that would be passed to the downstream RAG system. Together, MRR@10 and Hit@3 measure relevance quality, while the length-weighted score and AvgLength@3 measure the efficiency of the retrieved context. All metrics are reported on the validation set.

\begin{table}[h!]
\caption{Shared hyperparameters for the re-ranking experiments. All methods use the same ModernBERT-base cross-encoder backbone, candidate-set size, input truncation length, optimization budget, and batching configuration.}
\label{tab:reranker-shared-hparams}
\centering
\begin{tabular}{ll}
\toprule
\textbf{Category} & \textbf{Parameters} \\
\midrule
Task / Model & $\text{base model} = \text{ModernBERT-base}$ \\
             & $\text{max\_length} = 512$ \\
             & $\text{num\_negatives} = 9$ \\
             & $\text{top\_k} = 3$ \\
             & $k_\lambda = 3$ \\
\hline
Training     & $\text{num\_train\_epochs} = 10$ \\
             & $\text{learning\_rate} = 2 \times 10^{-5}$ \\
             & $\text{lr\_scheduler\_type} = \text{linear}$ \\
             & $\text{warmup\_ratio} = 0.1$ \\
             & $\text{weight\_decay} = 0$ \\
             & $\text{gradient\_accumulation\_steps} = 8$ \\
             & $\text{train batch size} = 8$ \\
             & $\text{precision} = \text{bf16}$ \\
             & $\text{dropout} = 0.05$ \\
\bottomrule
\end{tabular}
\end{table}

\paragraph{Hyperparameters.} All re-ranking experiments use the same dataset and training configuration. In particular, we use ModernBERT-base as the shared cross-encoder backbone, truncate query--passage inputs to 512 tokens, and construct each instance with one positive passage and nine negatives. Table~\ref{tab:reranker-shared-hparams} shows the complete list of hyperparameters used.


\subsection{Instruction Following with Safety Refusal}

Instruction-following chat assistants risk complying with malicious or harmful user requests. 
%
Most mitigation strategies typically fall into two families: (i) training-time alignment that teaches the model to refuse unsafe instructions( e.g.~\cite{bai2022training, dai2024safe,huang2024one, wachi2024stepwise, kim2025safedpo}), and (ii) inference-time filtering that blocks unsafe inputs before they reach the LM~\cite{sharma2025constitutional-guardrail, zeng2025shieldgemma-guardrail, sreedhar2025safety-guardrail}. We adopt the former and cast safety as a constraint with helpfulness as the optimization objective.

Moreover, an unintended consequence of safety tuning is that models can start refusing safe queries, phenomenon known as exaggerated safety or over-refusal~\cite{xstest, overefusal}. Therefore, we also add a constraint to avoid refusals on safe prompts. 

Let $\pi_{\theta_{\mathrm{ref}}}$ be a fixed reference LM (e.g., the pretrained model). We denote the distribution of benign instructions paired with helpful responses $D_{
H}$, and $D_{U}$ the distribution of unsafe prompts. We tackle the pointwise constrained learning problem:
\begin{align}
\min_{\theta \in \Theta}. \quad & \EE_{(\bx_h,\by_h)\sim\mathcal{D}_{\text{H}}}
\Big[ D_{KL}\left(\pi_\theta(\by_h \mid \bx_h)\|\pi_{\text{ref}}(\by_h \mid \bx_h)\right) \Big] \label{eq:csft_obj}\\
\text{s.to} \quad & \pi_\theta(\by_r\mid \bx_u) \geq \epsilon_{\text{U}} \quad
x_u \sim \mathcal{D}_{\text{U}}-\text{a.e.}\label{eq:csft_constr1}\\
& \pi_\theta(\by_r\mid \bx_h) \leq \epsilon_\text{H} \quad
x_h \sim \mathcal{D}^X_\text{H}-\text{a.e.}\label{eq:csft_constr2}
\end{align}
The objective in~\eqref{eq:csft_obj} is standard KL divergence on benign data. The constraints bound the probability of refusal responses:  for unsafe content we require this probability to be high~\eqref{eq:csft_constr1}, and for every safe prompt we require the probability to be low~\eqref{eq:csft_constr2}. For numerical stability reasons we impose this constraints not directly on probabilities but on length normalized log likelihood, i.e., $ \frac{1}{\left|y_r\right|} \log \pi_\theta\left({y_r} \mid x\right)$, and set the constraint level to $\epsilon^\prime = \frac{1}{\left|y_r\right|} \log \epsilon$, where epsilon denotes the probability tolerances in~\eqref{eq:csft_constr1}-\eqref{eq:csft_constr2}.

\paragraph{Dataset.}
We use the Alpaca-long1k~\cite{alpaca-long1k} dataset for instruction following. After exploratory experiments that revealed lack of generalization for constraint distributions for the PKU-SafeRLHF-30k dataset~\cite{dai2024safe} (see section~\ref{app:results-fail}), we constructed the safety tuning dataset from the larger Beavertails~\cite{ji2024beavertails} dataset, which provides 330k human-annotated prompt-response pairs across 14 safety categories. To ensure high label quality and mitigate annotation noise, we retained only those samples evaluated by more than two crowd-workers where all annotators were in unanimous agreement. From this subset, we sampled 4,200 instances from each of the three most frequent categories:
\begin{itemize}
    \item \textbf{Violence, Aiding and Abetting, Incitement:} Content involving physical harm, threats, or violent behavior. This includes aiding and abetting (supporting or encouraging illegal activities) and incitement (provoking or stirring up harmful or violent actions).
    \item \textbf{Financial Crime, Property Crime, Theft:} Illegal activities leading to financial loss, such as fraud, embezzlement, money laundering, and bribery. This also encompasses property crimes and the non-consensual taking of property.
    \item \textbf{Privacy Violation:} Actions that invade individual privacy, including unauthorized access or disclosure of personal data, stalking, doxing, or any intrusion into personal life without consent.
\end{itemize}
We hold-out 600 samples, 200 from each category, and 4k samples from the Alpaca dataset~\cite{alpaca} for validation.

\paragraph{Evaluation} We evaluate instruction following using Alpaca-eval~\cite{alpaca_eval} with default hyperparameters. We compare the generations from the finetuned and base model using GPT-5.1 as a judge. We sample the model at the end of training on the held-out safety prompts setting maximum new tokens to $512$ and using the transformers library's\cite{transformers-library} default parameters for sampling (\texttt{temperature=1.0, top\_p=1.0, top\_k=50, repetition\_penalty=1.0, length\_penalty=1.0}). We label a response as a refusal if the generated output begins with a predefined set of refusal-related phrases  ``i’m sorry'', ``i cannot'', ``i do not support'', ``i am programmed to be neutral''. 

\textbf{Hyperparameters} We first finetune the Llama-7b base model on Alpaca longest 1k samples~\cite{alpaca-long1k} minimizing cross entropy via standard empirical-risk minimisation with LoRA adapters ($r=64$, $\alpha=128$, dropout$=0.05$) for 5 epochs using AdamW~\cite{adamw} at learning rate 5e-4 with a linear schedule, 10\% warm-up steps, 0.05 weight decay, and an effective batch size of 32. The resulting finetuned Llama model is used as a base model for all subsequent experiments.

\begin{table}[h!]
\caption{Shared training and model parameters across all runs in the safety-tuning task.}\label{tab:safe-shared_hyperparams}
\centering
\begin{tabular}{l|l}
\toprule
\textbf{Category} & \textbf{Parameters} \\
\midrule
Training &
$\begin{aligned}
\text{epochs} &= 10 \\
\text{learning rate} &= 5 \times 10^{-4} \\
\text{batch size} &= 32 \\
\text{lr scheduler} &= \text{cosine} \\
\text{warmup ratio} &= 0.1 \\
\text{weight decay} &= 0.01 \\
\text{optimizer} &= \text{AdamW (fused)} \\
\text{max grad norm} &= 1.0 \\
\text{precision} &= \text{bf16} \\
\text{max length} &= 2048
\end{aligned}$ \\
\hline
LoRA &
$\begin{aligned}
r &= 64 \\
\text{lora\_alpha} &= 128 \\
\text{lora\_dropout} &= 0.05 \\
\text{target modules} &= \{q,k,v,o\}\text{\_proj}
\end{aligned}$ \\
\bottomrule
\end{tabular}
\end{table}

\begin{table}[h!]
\caption{Per method hyperparameters for the safety tuning task}\label{tab:safe-method_hyperparams}
\centering
\begin{tabular}{l|l}
\toprule
\textbf{Method} & \textbf{Hyperparameters} \\
\midrule
pen &
$\begin{aligned}
\lambda_0 &= 5 
\end{aligned}$ \\
\hline
Safe-DPO &
$\begin{aligned}
\Delta &\in\{2, 5, 10\}
\end{aligned}$ \\
\hline
avg &
$\begin{aligned}
\alpha &= 1000 \\
\varepsilon_\text{U} &\in\{ 0.08, 0.16, -0.24, -0.28, -0.32 \}  \\
\varepsilon_\text{H} &= -2 \\
\eta_{\text{dual}} &= 1 \\
\end{aligned}$ \\
\hline
point &
$\begin{aligned}
\alpha &= 1000 \\
\varepsilon_\text{U} &\in\{ 0.08, 0.16, -0.24, -0.32 \}  \\
\varepsilon_\text{H} &= -2 \\
\eta_{\text{dual}} &= 1 \\
\end{aligned}$ \\
\hline
relax &
$\begin{aligned}
\alpha &= 1000 \\
\varepsilon_\text{U} &=0.08  \\
\varepsilon_\text{H} &= -2 \\
\eta_{\text{dual}} &= 1 \\
\beta &\in \{ 5, 10, 20\}
\end{aligned}$\\
\bottomrule
\end{tabular}
\end{table}
\section{Experimental Results}\label{app:results}

\subsection{Function calling preferences}\label{app:results-func}
Table~\ref{tab:when2call_results} presents the evaluation results on the When2Call benchmark across all  models and methods. Firstly, all alignment techniques substantially improve upon the pre-trained models, which struggle with asking follow-up questions and recognizing when they are unable to answer, often resulting in severe hallucination rates (e.g., up to 57.8\% for xLAM-2). While standard preference alignment techniques (such as DPO and SimPO) drastically reduce these hallucinations and improve overall accuracy, they still exhibit uneven performance across the categories of the correct labels in terms of F1 scores. Explicitly, both models show lower performance on ``Follow-up question''.
Our approach effectively reduce this disparities. For example,  pointwise constraints achieves an 8\% improvement Follow-up F1 score with respect to SimPO for the xLAM model. Moreover, the relaxed variant (\textbf{relax}) successfully matches or slightly improves upon pointwise constraints across all categories. This is more prominent for the xLAM-2-1b-fc-r model, where the relaxed method not only improves in difficult categories, but also achieves the highest overall scores (Macro F1 and a Length Normalized Accuracy), while maintaining a perfect (0.0\%) hallucination rate.

\begin{sidewaystable}[]
\caption{Function calling (When2Call) results across models and methods. Values are shown as percentages. Baseline corresponds to the corresponding pre-trained model. We include results for Mistral-Nemo-Minitron sourced from~\cite{when2call} as a reference. The best and second-best scores per model are \textbf{bolded} and \underline{underlined}, respectively.}
\label{tab:when2call_results}
\centering
\small
\begin{tabular}{llcccccc}
\toprule
\multirow{2}{*}{Model} & \multirow{2}{*}{Method} & \multicolumn{4}{c}{F1 ($\uparrow$)} & \multirow{2}{*}{Length Norm Acc ($\uparrow$)} & \multirow{2}{*}{Hallucination ($\downarrow$)} \\
\cmidrule(lr){3-6}
 & & Tool call ($\uparrow$) & Follow-up question ($\uparrow$) & Unable to answer ($\uparrow$) & Macro average ($\uparrow$) & & \\
\midrule
\multirow{2}{*}{Mistral-NeMo-Minitron} & Base & 65.8 & 33.3 & 28.3 & 42.5 & 49.1 & 19.0 \\
 & RPO & 74.8 & 63.6 & 71.1 & 69.8 & 70.0 & 1.2 \\
\midrule
\multirow{9}{*}{Llama-3.2-1B-Instruct} & Base & 57.3 & 8.3 & 8.5 & 24.7 & 47.6 & 52.3 \\
 & DPO & 75.4\textcolor{gray}{$\pm$\,0.2} & 72.4\textcolor{gray}{$\pm$\,0.7} & 72.5\textcolor{gray}{$\pm$\,1.6} & 73.4\textcolor{gray}{$\pm$\,0.8} & 78.7\textcolor{gray}{$\pm$\,0.4} & 5.3\textcolor{gray}{$\pm$\,1.6} \\
 & SimPO & 78.0\textcolor{gray}{$\pm$\,0.1} & 72.3\textcolor{gray}{$\pm$\,0.8} & 76.7\textcolor{gray}{$\pm$\,1.6} & 75.7\textcolor{gray}{$\pm$\,0.6} & 79.1\textcolor{gray}{$\pm$\,0.4} & 3.4\textcolor{gray}{$\pm$\,0.2} \\
 & CPO & \textbf{80.8}\textcolor{gray}{$\pm$\,1.6} & 72.9\textcolor{gray}{$\pm$\,1.5} & \underline{86.7}\textcolor{gray}{$\pm$\,0.9} & \textbf{80.1}\textcolor{gray}{$\pm$\,1.3} & 76.8\textcolor{gray}{$\pm$\,4.9} & \underline{0.4}\textcolor{gray}{$\pm$\,0.0} \\
 & Cal-DPO & 78.8\textcolor{gray}{$\pm$\,1.4} & 72.8\textcolor{gray}{$\pm$\,2.3} & 81.9\textcolor{gray}{$\pm$\,0.6} & 77.8\textcolor{gray}{$\pm$\,1.3} & 78.1\textcolor{gray}{$\pm$\,1.5} & 1.0\textcolor{gray}{$\pm$\,0.2} \\
 & pen & 74.3\textcolor{gray}{$\pm$\,0.6} & 65.1\textcolor{gray}{$\pm$\,1.3} & 71.9\textcolor{gray}{$\pm$\,0.0} & 70.4\textcolor{gray}{$\pm$\,0.7} & 74.4\textcolor{gray}{$\pm$\,0.8} & 1.7\textcolor{gray}{$\pm$\,0.6} \\
 & avg & 77.1\textcolor{gray}{$\pm$\,2.6} & 60.2\textcolor{gray}{$\pm$\,12.1} & \textbf{87.1}\textcolor{gray}{$\pm$\,1.0} & 74.8\textcolor{gray}{$\pm$\,4.7} & 71.7\textcolor{gray}{$\pm$\,5.4} & 0.5\textcolor{gray}{$\pm$\,0.2} \\
 & point & 79.3\textcolor{gray}{$\pm$\,0.4} & \textbf{77.2}\textcolor{gray}{$\pm$\,0.3} & 79.1\textcolor{gray}{$\pm$\,0.4} & 78.5\textcolor{gray}{$\pm$\,0.1} & \underline{79.5}\textcolor{gray}{$\pm$\,0.4} & \textbf{0.2}\textcolor{gray}{$\pm$\,0.2} \\
 & relax & \underline{79.6}\textcolor{gray}{$\pm$\,0.2} & \underline{76.9}\textcolor{gray}{$\pm$\,0.7} & 80.0\textcolor{gray}{$\pm$\,0.4} & \underline{78.8}\textcolor{gray}{$\pm$\,0.4} & \textbf{80.0}\textcolor{gray}{$\pm$\,0.3} & 0.6\textcolor{gray}{$\pm$\,0.2} \\
\midrule
\multirow{9}{*}{xLAM-2-1b-fc-r} & Base & 57.2 & 11.2 & 12.1 & 26.8 & 42.9 & 57.8 \\
 & DPO & 78.1\textcolor{gray}{$\pm$\,1.1} & 75.3\textcolor{gray}{$\pm$\,1.2} & 82.4\textcolor{gray}{$\pm$\,0.9} & 78.6\textcolor{gray}{$\pm$\,1.1} & 80.5\textcolor{gray}{$\pm$\,0.5} & 1.8\textcolor{gray}{$\pm$\,0.5} \\
 & SimPO & 74.0\textcolor{gray}{$\pm$\,2.0} & 72.1\textcolor{gray}{$\pm$\,0.3} & 81.4\textcolor{gray}{$\pm$\,0.3} & 75.9\textcolor{gray}{$\pm$\,0.8} & 78.2\textcolor{gray}{$\pm$\,1.6} & \underline{0.3}\textcolor{gray}{$\pm$\,0.2} \\
 & CPO & 82.0\textcolor{gray}{$\pm$\,5.6} & 70.5\textcolor{gray}{$\pm$\,9.4} & 83.4\textcolor{gray}{$\pm$\,5.3} & 78.7\textcolor{gray}{$\pm$\,6.8} & 80.3\textcolor{gray}{$\pm$\,4.6} & 4.5\textcolor{gray}{$\pm$\,6.4} \\
 & Cal-DPO & 80.8\textcolor{gray}{$\pm$\,0.5} & 76.6\textcolor{gray}{$\pm$\,0.2} & 84.9\textcolor{gray}{$\pm$\,0.3} & 80.8\textcolor{gray}{$\pm$\,0.1} & 81.2\textcolor{gray}{$\pm$\,0.3} & \textbf{0.0}\textcolor{gray}{$\pm$\,0.0} \\
 & pen & 79.5\textcolor{gray}{$\pm$\,0.7} & 71.6\textcolor{gray}{$\pm$\,0.4} & 81.5\textcolor{gray}{$\pm$\,0.7} & 77.6\textcolor{gray}{$\pm$\,0.5} & 78.9\textcolor{gray}{$\pm$\,0.7} & 0.9\textcolor{gray}{$\pm$\,0.5} \\
 & avg & \textbf{84.1}\textcolor{gray}{$\pm$\,0.9} & 75.8\textcolor{gray}{$\pm$\,0.5} & \underline{86.2}\textcolor{gray}{$\pm$\,0.6} & 82.0\textcolor{gray}{$\pm$\,0.5} & 82.1\textcolor{gray}{$\pm$\,0.6} & \textbf{0.0}\textcolor{gray}{$\pm$\,0.0} \\
 & point & 83.6\textcolor{gray}{$\pm$\,0.3} & \underline{80.9}\textcolor{gray}{$\pm$\,0.4} & 86.1\textcolor{gray}{$\pm$\,0.4} & \underline{83.5}\textcolor{gray}{$\pm$\,0.2} & \underline{83.7}\textcolor{gray}{$\pm$\,0.2} & \textbf{0.0}\textcolor{gray}{$\pm$\,0.0} \\
 & relax & \underline{84.0}\textcolor{gray}{$\pm$\,0.3} & \textbf{81.1}\textcolor{gray}{$\pm$\,0.4} & \textbf{86.4}\textcolor{gray}{$\pm$\,0.3} & \textbf{83.8}\textcolor{gray}{$\pm$\,0.3} & \textbf{84.1}\textcolor{gray}{$\pm$\,0.1} & \textbf{0.0}\textcolor{gray}{$\pm$\,0.0} \\
\bottomrule
\end{tabular}
\end{sidewaystable}

We also analyze the distributions of length normalized probabilities assigned to the preferred and rejected responses by the Llama3 model in the When2Call test split across all alignment methods. We present the empirical Cumulative Density Function (CDF) for of response probabilities for different subsets of methods to ease readability. Figure~\ref{fig:point-average-pen-cdfs-when2} shows  that the Average and Penalty methods exhibit heavier tails for both the preferred and rejected responses. Figure~\ref{fig:simpo-when2} shows that SimPO decreases the probability of both responses --including the prefered one --. In contrast, the Pointwise and Resilient approaches successfully maintain the probability of the chosen response high. Figure~\ref{fig:cpo-when2} shows that CPO, which mitigates this phenomenon by adding a penalization to increase the probability of the preferred response to the margin loss, yields a heavier tailed distribution akin to average constraints. Similarly, in Figure~\ref{fig:cal-dpo} Cal-DPO also shows heavier tails. Lastly, Figure~\ref{fig:resilient-when2} shows how decreasing the relaxation cost result in lower probabilities for the prefered response and increases the rejected one.

\begin{figure}
    \centering
    \includegraphics[width=0.9\linewidth]{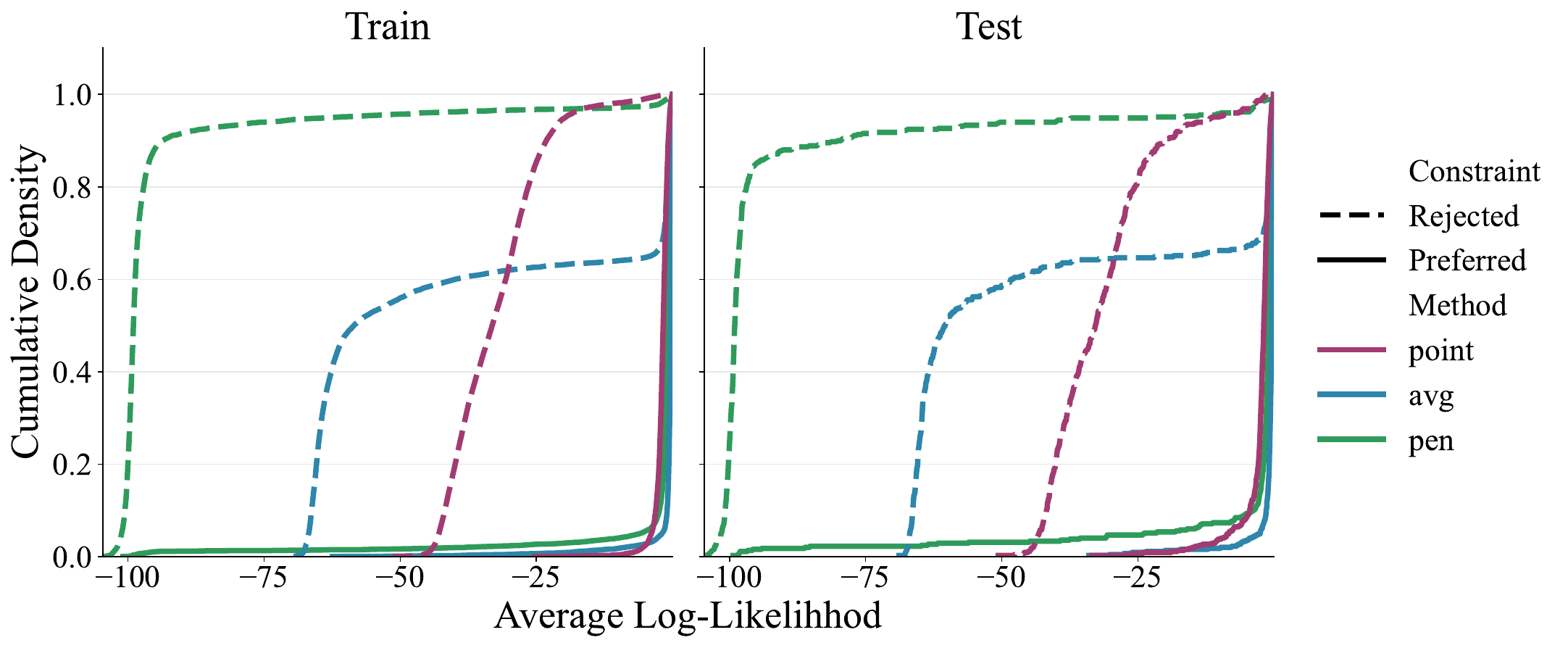}
    \caption{Empirical CDF of the tokenwise average log likelihood, i.e. $\frac{1}{|\bby|}\log(\pi(\bby|\bbx))$, evaluated on Llama-3.2-1B on When2Call. We compare pointwise constraints, average constraints, a fixed penalty with the base model for chosen and rejected responses.\textit{ Pointwise constraints achieve more concentrated probability distributions.}}
    \label{fig:point-average-pen-cdfs-when2}
\end{figure}

\begin{figure}
    \centering
    \includegraphics[width=0.9\linewidth]{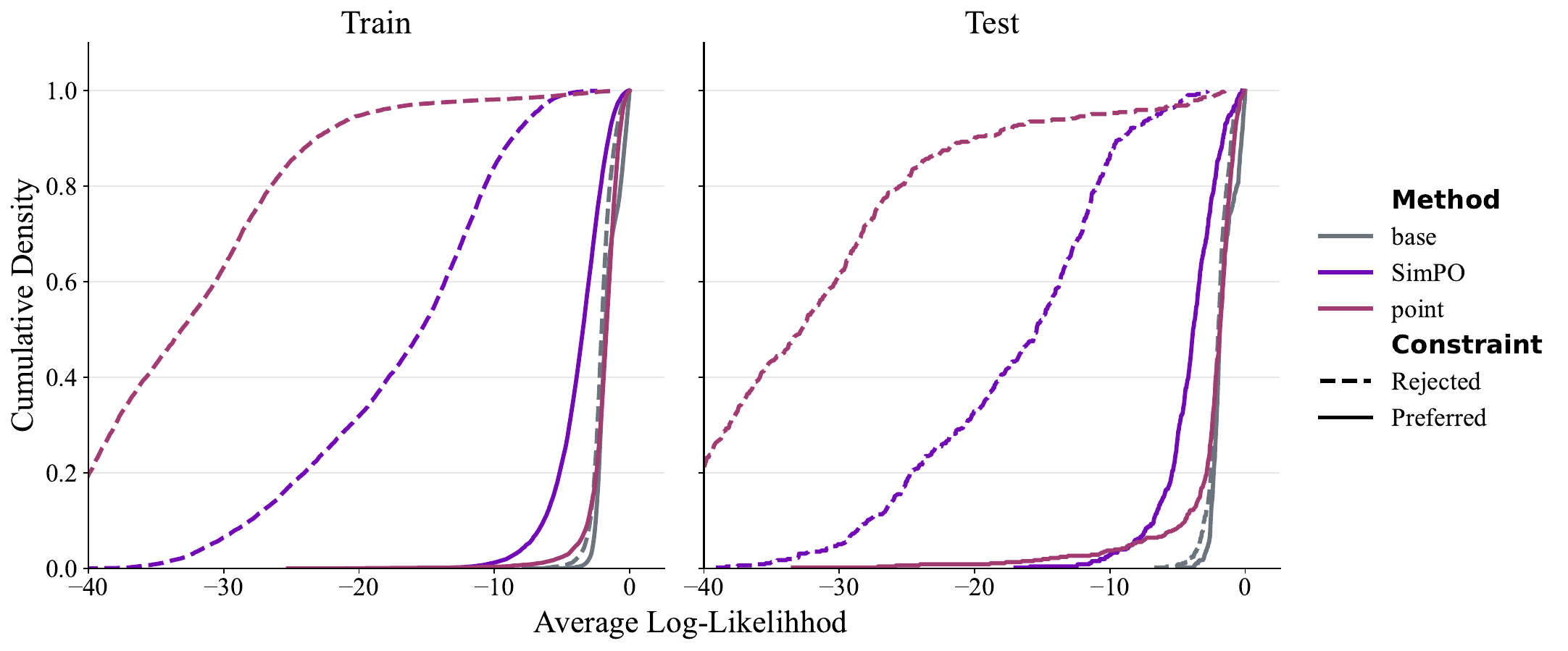}
    \caption{Empirical CDF of the tokenwise average log likelihood, i.e. $\frac{1}{|\bby|}\log(\pi(\bby|\bbx))$, evaluated on Llama-3.2-1B on when2call. We compare pointwise constraints with SimPO for chosen and rejected responses.\textit{ Margin losses reduce the likelihood of prefered responses.}}
    \label{fig:simpo-when2}
\end{figure}

\begin{figure}
    \centering
    \includegraphics[width=0.9\linewidth]{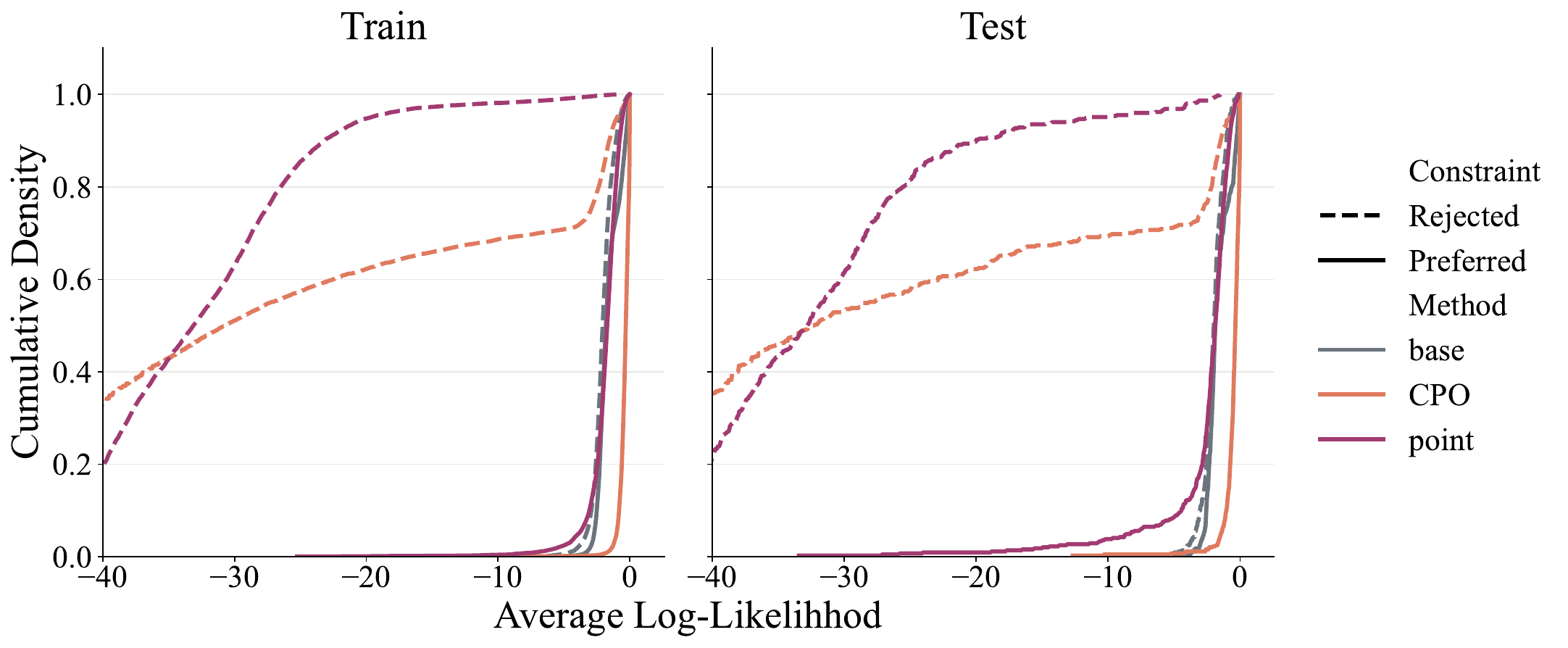}
    \caption{Empirical CDF of the tokenwise average log likelihood, i.e. $\frac{1}{|\bby|}\log(\pi(\bby|\bbx))$, evaluated on Llama-3.2-1B on when2call. We compare pointwise constraints with CPO for chosen and rejected responses.\textit{ Adding average penalties to margin losses (CPO) results in heavier tails.}}
    \label{fig:cpo-when2}
\end{figure}

\begin{figure}
    \centering
    \includegraphics[width=0.9\linewidth]{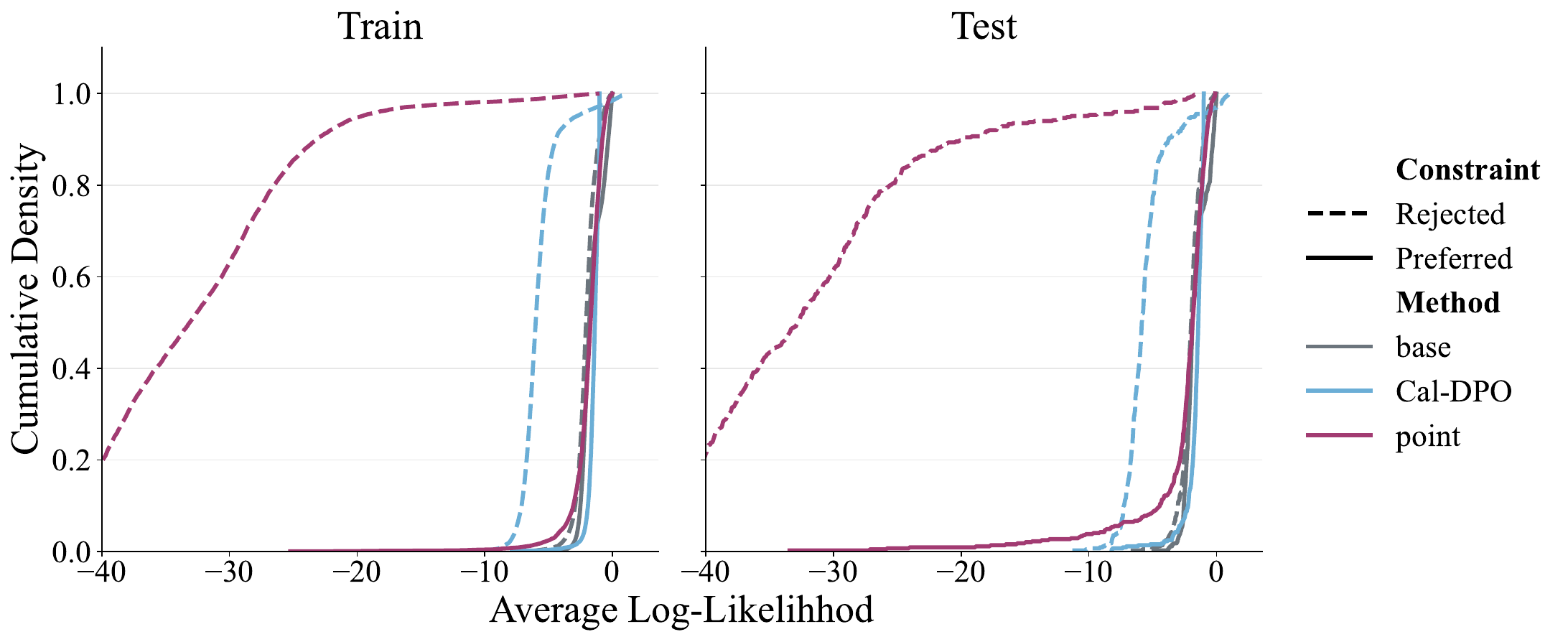}
    \caption{Empirical CDF of the tokenwise average log likelihood, i.e. $\frac{1}{|\bby|}\log(\pi(\bby|\bbx))$, evaluated on Llama-3.2-1B on when2call. We compare pointwise constraints with Cal-DPO for chosen and rejected responses.\textit{ Quadratic penalty methods (Cal-DPO) exhibit heavier tails than our approach.}}
    \label{fig:cal-dpo}
\end{figure}

\begin{figure}
    \centering
    \includegraphics[width=0.9\linewidth]{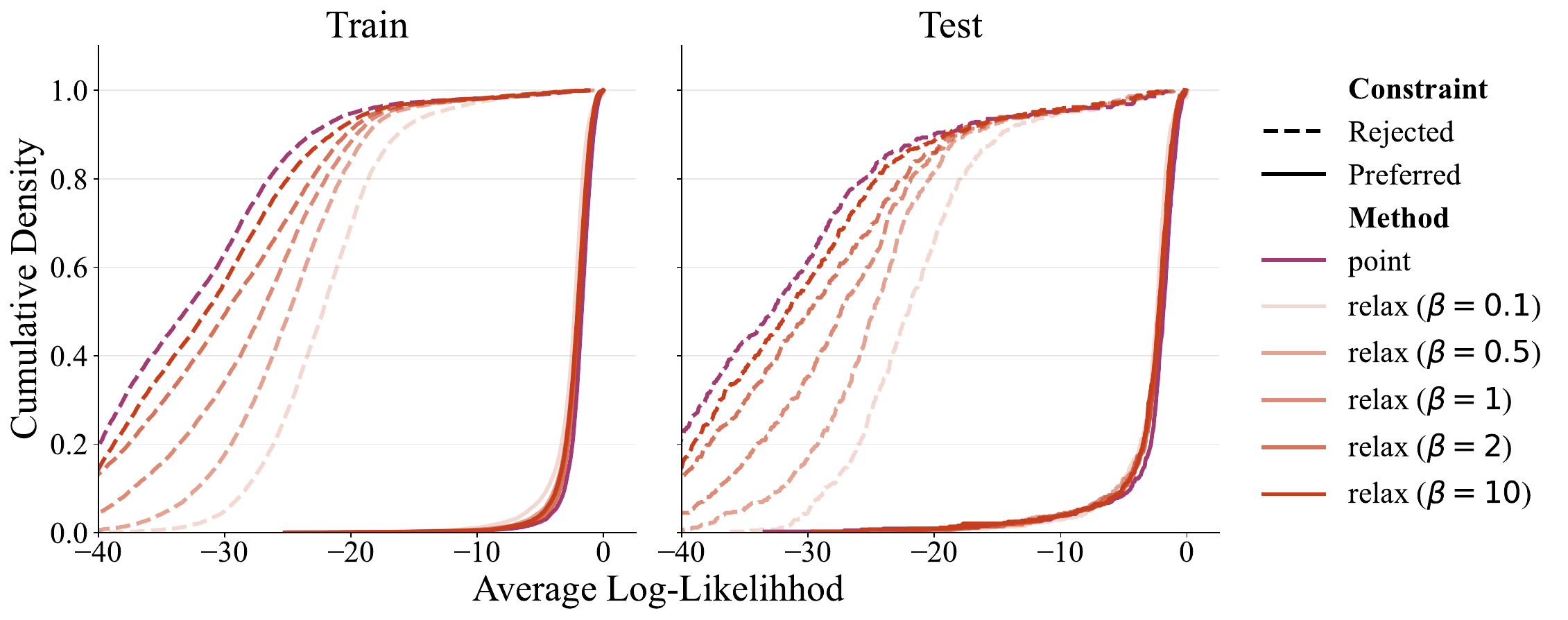}
    \caption{Empirical CDF of the tokenwise average log likelihood, i.e. $\frac{1}{|\bby|}\log(\pi(\bby|\bbx))$, evaluated on Llama-3.2-1B on when2call. We compare pointwise constraints with the relaxation for different costs $\beta$.}
    \label{fig:resilient-when2}
\end{figure}

\subsection{Safety Refusal}\label{app:res-refusal}

We also analyze the distributions of length normalized probabilities assigned to the refusal responses for safe and unsafe prompts by the Llama2-7b model across different alignment methods. 
As shown in figure~\ref{fig:pen-avg-point-cdf-safe}, the fixed penalty (which we set equally for safe and unsafe data) heavily over penalizes refusals for safe data and under-penalizes refusals for unsafe data. Optimizing this scalar weight by imposing average constraints leads to higher probabilities for unsafe prompt refusals, yet with more spread across samples. Lastly, the Safe-DPO baseline exhibits heavy tails in the test set, for both the preferred and rejected responses. Figure~\ref{fig:resilience-cdf-safe} shows that the Resilient approach relaxes the constraints, leading to heavier tailed likelihoods for unsafe refusals, but more concentrated refusal likelihoods for unsafe prompts. 

Table~\ref{tab:app-safety} presents the evaluation metrics for safety and instruction-following capabilities. Firstly, while Safe-DPO can increase the harmlessness of the model, it comes at the cost of a severely degrading the model's instruction-following abilities. Explicitly, for $\Delta=10$ achieves a 78.6\% harmless rate but drastically reduces the AlpacaEval length-controlled win rate to 15.7\%, caused a large KL divergence (2.69) with respect to the pre-trained model. Our approach achieves safety (99.8\% Refusal rate and 96.6\% Harmless) at a considerably cost in terms of KL divergence (0.15) and a 45.0\% win-rate in alpaca-eval with respect to the base model. Lastly decreasing the hyperparameter (e.g., \textbf{relax} $\beta=5$) successfully recovers the win rate to 49.9\%, nearly matching the base model, but it does so at the cost of safety. Although at low safety scores, average constraints also show competitive performance in terms of helpfulness, these models might be unacceptably unsafe to be of any practical value.

\begin{figure}
    \centering
\includegraphics[width=0.9\linewidth]{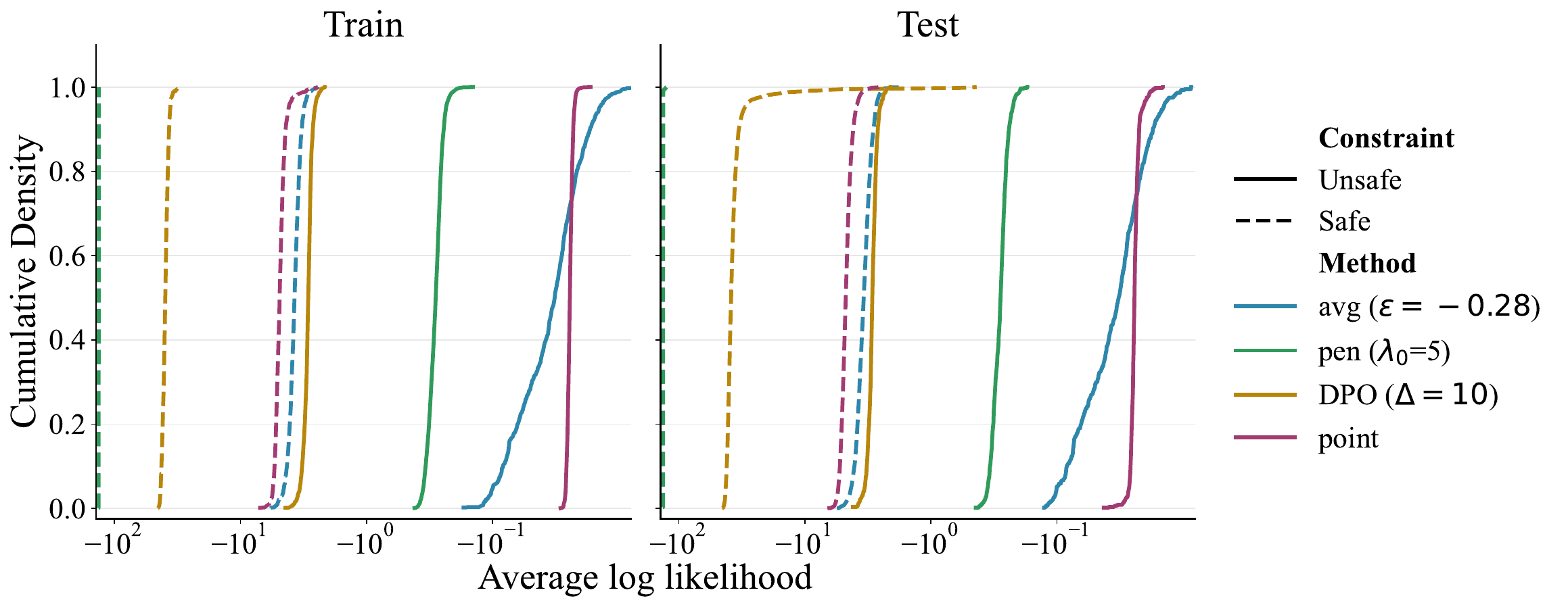}
    \caption{Empirical CDF of the tokenwise average log likelihood, i.e. $\frac{1}{|\bby|}\log(\pi(\bby|\bbx))$, evaluated on Llama-7b on Alpaca and beavertails. We compare pointwise constraints, average constraints a fixed penalty, and Safe-DPO.}
    \label{fig:pen-avg-point-cdf-safe}
\end{figure}

\begin{figure}
    \centering
\includegraphics[width=0.9\linewidth]{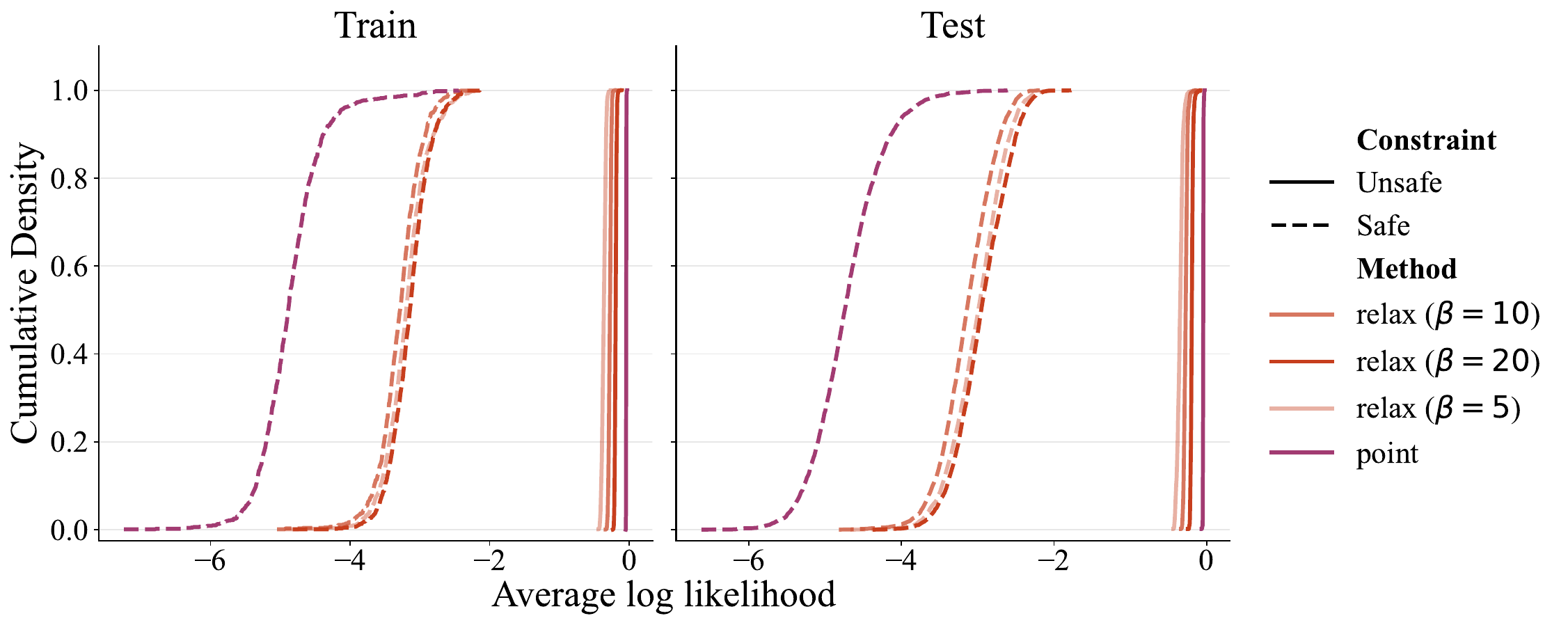}
    \caption{Empirical CDF of the tokenwise average log likelihood, i.e. $\frac{1}{|\bby|}\log(\pi(\bby|\bbx))$, evaluated on Llama2-7b on Alpaca and beavertails. We compare pointwise constraints, and resilience with different values of $\beta$.}
    \label{fig:resilience-cdf-safe}
\end{figure}

\begin{table}[t]
\caption{KL divergence in a held out set of alpaca~\cite{alpaca}, length controlled win rates on AlpacaEval~\cite{alpaca_eval}, refusal rates over harmful prompts from beavertails~\cite{ji2024beavertails} and percentage of responses classified as harmless by \texttt{beaver-7b-unified-cost} from PKU-SafeRLHF~\cite{dai2024safe} evaluation split. The base model is Llama-7B~\cite{touvron2023llama} finetuned on alpaca-1k-longest~\cite{alpaca-long1k}. \emph{Pointwise constraints achieve high refusal rates with minimal degradation in instruction following abilities.}}\label{tab:app-safety}
    \centering
\begin{tabular}{lcccc}
\toprule
 & \multicolumn{1}{c}{Alpaca} & AlpacaEval & BeaverTails & PKU-SafeRLHF \\
 & $D_{\mathrm{KL}}$($\downarrow$) & LC WR($\uparrow$) & Refusal($\uparrow$) & Harmless($\uparrow$) \\
\midrule
Base & 0.00 & 50.0\% & 0.0\% & 29.9\% \\
\midrule
Safe-DPO ($\Delta{=}2$) & 1.35 & 24.9\% & 36.0\% & 40.6\% \\
Safe-DPO ($\Delta{=}5$) & 0.80 & 32.3\% & 48.2\% & 32.6\% \\
Safe-DPO ($\Delta{=}10$) & 2.69 & 15.7\% & 74.7\% & 78.6\% \\
avg ($\epsilon{=}{-}0.32$) & 0.34 & 49.6\% & 39.7\% & 56.8\% \\
avg ($\epsilon{=}{-}0.28$) & 0.41 & 44.2\% & 70.0\% & 88.0\% \\
avg ($\epsilon{=}{-}0.24$) & 0.17 & 41.2\% & 100.0\% & 96.4\% \\
point & 0.15 & 45.0\% & 99.8\% & 96.6\% \\
relax ($\beta{=}20$) & 0.10 & 46.8\% & 73.0\% & 76.6\% \\
relax ($\beta{=}10$) & 0.09 & 48.4\% & 49.3\% & 61.2\% \\
relax ($\beta{=}5$) & 0.09 & 49.9\% & 21.0\% & 45.8\% \\
\bottomrule
\end{tabular}
\end{table}

\subsection{Re-ranking}\label{app:res-reranking}

\begin{sidewaystable}[p]
\caption{
Test-set performance of all methods and hyperparameter configurations for the reranking experiment. 
The column Avg. Top-3 Len. reports the average length of the top-$3$ ranked documents, 
$\mathrm{Hit@3}$ reports whether the relevant document appears among the top-$3$ results, 
$\mathrm{MRR@10}$ reports the mean reciprocal rank within the top-$10$, 
and $\mathrm{LenRank@10}$ reports the length-aware ranking metric used as the training objective. 
The Objective column gives the evaluated objective value, while Slack $\mathrm{CVaR}_{95}$ and Slack Mean summarize, respectively, the tail and average behavior of the constraint slacks on the test set.
}
\label{tab:reranking_all_results}
\centering
\small
\setlength{\tabcolsep}{5pt}
\renewcommand{\arraystretch}{1.15}

\begin{tabular}{clrrrrrrr}
\toprule
\textbf{Method}
& \textbf{Hyperparameter}
& {\textbf{Avg. Top-3 Len. $(\downarrow)$}}
& {\textbf{Hit@3}$(\uparrow)$}
& {\textbf{MRR@10}$(\uparrow)$}
& {\textbf{LenRank@10}$(\downarrow)$}
& {\textbf{Objective}$(\downarrow)$}
& {\textbf{Slack $\mathrm{CVaR}_{95}$}$(\downarrow)$}
& {\textbf{Slack Mean}$(\downarrow)$} \\
\midrule

{\large\textsc{monoBERT}}
& -- & 73.581 & \textbf{0.906} & \underline{0.740} & 202.851 & -- & -- & --\\
\midrule

\multirow{5}{*}{\large\textsc{Avg}}
& $\epsilon = 50$  & 44.842 & 0.616 & 0.505 & 152.539 & 1.697  & 112.286 & -59.065  \\
& $\epsilon = 60$  & 48.036 & 0.686 & 0.543 & 157.043 & 3.588  & 118.515 & -68.477  \\
& $\epsilon = 75$  & 50.240 & 0.731 & 0.591 & 160.206 & 9.055  & 137.722 & -90.809  \\
& $\epsilon = 90$  & 56.463 & 0.805 & 0.638 & 168.411 & 21.963 & 119.366 & -116.963 \\
& $\epsilon = 100$ & 64.203 & 0.812 & 0.637 & 185.447 & 40.305 & 144.588 & -155.235 \\
\midrule

\multirow{5}{*}{\large\textsc{Pen}}
& $\lambda_0 = 0.1$  & \textbf{43.536} & 0.547 & 0.394 & \underline{150.561} & 2.352  & 223.857 & -74.209  \\
& $\lambda_0 = 0.2$  & 48.240 & 0.659 & 0.487 & 157.564 & 7.380  & 235.806 & -91.427  \\
& $\lambda_0 = 0.4$  & 53.133 & 0.744 & 0.571 & 164.371 & 23.262 & 176.521 & -131.843 \\
& $\lambda_0 = 1.0$  & 62.241 & 0.782 & 0.594 & 182.180 & 43.595 & 111.754 & \underline{-162.433} \\
& $\lambda_0 = 10.0$ & 71.531 & 0.813 & 0.619 & 198.403 & 81.484 & 21.426  & \textbf{-185.201} \\
\midrule

\multirow{2}{*}{\large\textsc{Point}}
& $\epsilon = 1$ & 62.078 & 0.883 & 0.727 & 176.850 & 0.544 & \textbf{0.708} & -1.763 \\
& $\epsilon = 3$ & 67.025 & \underline{0.903} & \textbf{0.743} & 186.180 & {1.336} & 1.499 & -3.848 \\
\midrule

\multirow{7}{*}{\large\textsc{Relaxed}}
& $\beta = 0.1$  & \underline{43.719} & 0.59 & 0.458 & \textbf{148.623} & \textbf{0.152} & 2.404 & -1.849 \\
& $\beta = 0.15$ & 47.208 & 0.681 & 0.536 & 152.746 & \underline{0.216} & 2.973 & -2.383 \\
& $\beta = 0.2$  & 48.942 & 0.714 & 0.571 & 155.080 & 0.252 & 2.736 & -2.345 \\
& $\beta = 0.3$  & 51.006 & 0.748 & 0.616 & 158.401 & 0.287 & 3.090 & -2.202 \\
& $\beta = 0.4$  & 53.289 & 0.785 & 0.646 & 161.687 & 0.333 & 2.169 & -2.121 \\
& $\beta = 1.0$  & 56.073 & 0.830 & 0.678 & 166.701 & 0.402 & \underline{1.022} & -1.789 \\
& $\beta = 10.0$ & 58.828 & 0.846 & 0.706 & 171.021 & 0.454 & 1.837 & -2.085 \\

\bottomrule
\end{tabular}
\end{sidewaystable}

Table~\ref{tab:reranking_all_results} reports the complete test-set results for the metrics detailed above. For each method and hyperparameter configuration, we evaluate both relevance performance, through $\mathrm{Acc@3}$ and $\mathrm{MRR@10}$, and length-aware behavior, through the average length of the top-$3$ retrieved documents and $\mathrm{LenRank@10}$. We also include the corresponding objective value and slack statistics, which summarize the magnitude and tail behavior of the constraint violations on the test set. Unless otherwise specified, the dual step size is set to $0.1$ for the average, relaxed, and pointwise methods. For the relaxed method, we fix the constraint level to $\epsilon = 3$, while both the relaxed and pointwise methods use augmented Lagrangian parameter $\alpha = 1$. These results provide a detailed view of the trade-offs induced by each training strategy.

\subsection{Failure cases and Negative results}\label{app:results-fail}

Our framework is not oblivious to generalization challenges and data  limitations. We want to highlight two particular failure modes that we have observed in some experimental settings in the spirit of aiding future research.

In some settings we observed that changing the train distribution of losses did not significantly affect the losses for test samples, i.e., lacking generalization. In figure~\ref{fig:fail-safe-no-gen} we show that for a Llama-7b model trained on PKU-SafeRLHF-30k~\cite{dai2024safe} the test empirical cumulative density functions are almost identical whereas differences in train are observed. 

Moreover, in many settings the constraint is a surrogate for some downstream performance metric. This is the case, for example, of log-likelihood ratios between preferred and rejected responses in preference alignment. As a result, sometimes shifts in the distribution of the surrogate do not affect metrics, like accuracy, in the desired way. Figure~\ref{fig:orca-failure} shows the CDF of log likelihood ratios for preference data from a cleaned version of intel orca dpo pairs\footnote{\texttt{\url{https://huggingface.co/datasets/argilla/distilabel-intel-orca-dpo-pairs}}}, for Llama2-7B trained with pointwise constraints and SimPO. Pointwise constraints achieve 5\% more training samples where the likelihood of the prefered response is higher than the rejected response, i.e., classified correctly. On test samples, both methods are identical with respect to the number of correct samples. However, the distributions do differ, and models may exibit different behaviours in other respects e.g. in calibration.

\begin{figure}
    \centering \includegraphics[width=0.9\linewidth]{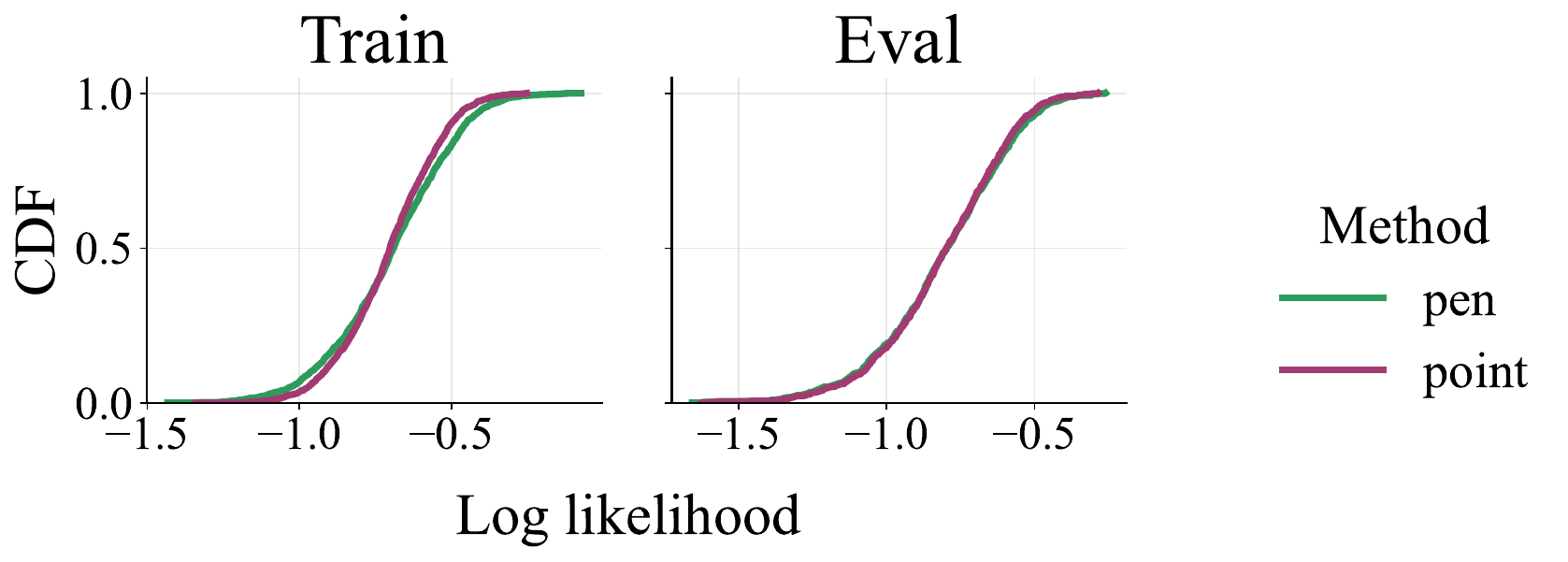}
    \caption{Empirical Cumulative Density Function (CDF) of length normalized likelihood for unsafe responses in the Safe-RLHF~\cite{dai2024safe} dataset for a Llama2-7b in training (left) and test (right) splits. \textit{Although pointwise constraints induce a  distribution shift in training samples it fails to generalize to held-out data.}}
    \label{fig:fail-safe-no-gen}
\end{figure}

\begin{figure}
\centering
\includegraphics[width=0.9\linewidth]{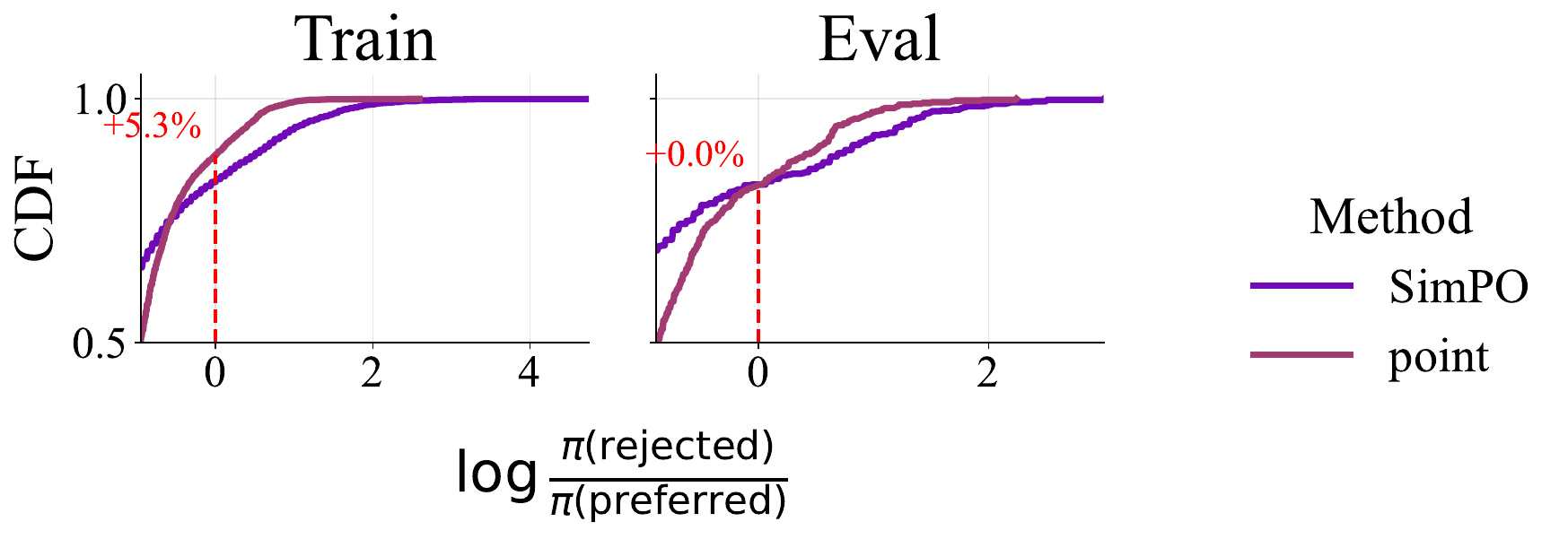}
   \caption{Empirical Cumulative Density Function (CDF) of likelihood ratios for preference data. In red we highlight the change in samples with negative ratios (correctly classified). \textit{Although pointwise constraints induce a  distribution shift in both train and test, it does not change the number of correctly classified preferences.}}
    \label{fig:orca-failure}
\end{figure}

\section{Ablations}\label{apx:ablations}

\subsection{Augmented Lagrangian outperforms the standard Lagrangian in practice.}\label{apx:aug_dual_vs_dual}

We compare the standard Lagrangian dual formulation in~\eqref{eq:lagrangian} with the augmented Lagrangian formulation in~\eqref{eq:aug_lagrangian_og}, using the same constraint tolerance $\epsilon=3$ and augmentation parameter $\alpha=1$ for the re-ranking task. The results in Figure~\ref{fig:aug-dual-vs-dual} show that the augmented formulation improves constraint satisfaction relative to the standard dual approach. This can be observed directly through the CVaR constraint violation, where the best augmented-dual configuration attains a value of $2.499$, improving over the best standard-dual value of $3.54$. However, the difference is more clearly reflected in the ranking performance, as measured by $\mathrm{MRR}@10$. The augmented formulation achieves the best overall $\mathrm{MRR}@10$ of $0.743$, while the best standard-dual configuration reaches only $0.711$. Importantly, this improvement does not come at the expense of the optimization objective: the augmented-dual configuration also attains a better performance with respect to length reduction. This joint improvement in relevance and objective performance is further reflected in the Pareto curve in Figure~\ref{fig:aug-dual-vs-dual}. 

The ablation also shows that the augmented Lagrangian is substantially more robust to the choice of dual step size. For the standard Lagrangian, decreasing the dual step size leads to a large degradation in $\mathrm{MRR}@10$, from $0.711$ at step size $1.0$ to $0.488$ at step size $10^{-3}$. In contrast, the augmented Lagrangian remains stable across step sizes $10^{-3}$, $10^{-2}$, and $10^{-1}$, with $\mathrm{MRR}@10$ values between $0.724$ and $0.743$. These results suggest that the quadratic augmentation makes the method less sensitive to the dual learning-rate hyperparameter. Nonetheless, fine-tuning the dual step size can still yield additional gains, with the best augmented-dual performance obtained around step size $10^{-1}$.

\begin{figure}[h]
\begin{subfigure}[c]{0.42\linewidth}
\centering
\includegraphics[width=\linewidth]{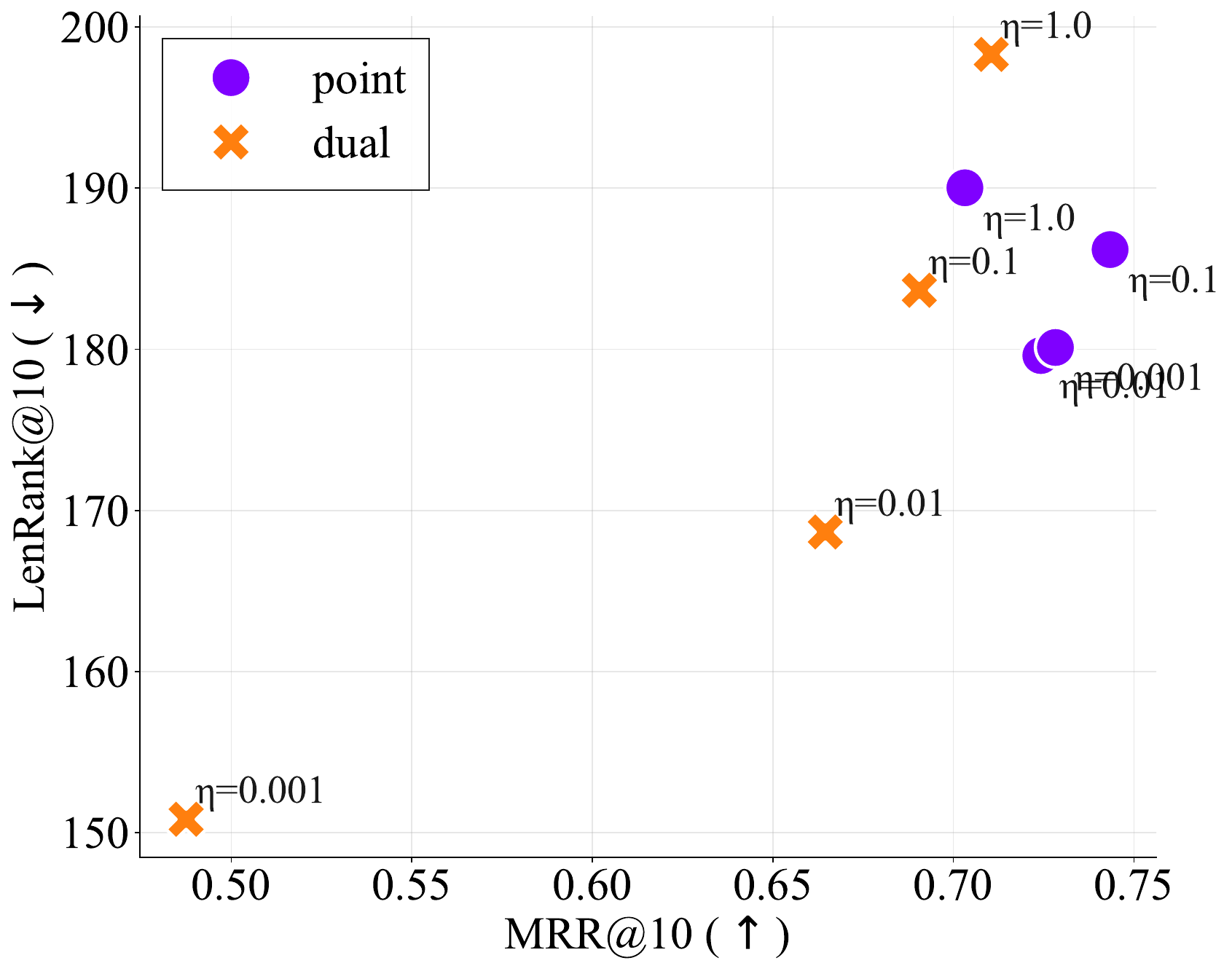}
\label{fig:aug_dual_vs_dual_pareto}
\end{subfigure}
\hfill
\begin{subfigure}[c]{0.56\linewidth}
\centering
\scriptsize
\resizebox{\linewidth}{!}{%
\begin{tabular}{llrrr}
\toprule
Method 
& $\eta$ 
& $\mathrm{MRR}@10$ 
& $\mathrm{LenRank}@10$ 
& $\mathrm{CVaR}_{95}$ \\
\midrule
\multirow{4}{*}{\centering Dual} 
& $0.001$ & $0.488$ & $150.840$ & $6.824$ \\
& $0.01$  & $0.665$ & $168.661$ & $4.964$ \\
& $0.1$   & $0.691$ & $183.652$ & $4.271$ \\
& $1.0$   & $0.711$ & $198.286$ & $3.549$ \\
\midrule
\multirow{4}{*}{\centering Aug. Dual} 
& $0.001$ & $0.728$ & $180.106$ & $3.092$ \\
& $0.01$  & $0.724$ & $179.606$ & $3.336$ \\
& $0.1$   & $0.743$ & $186.180$ & $2.499$ \\
& $1.0$   & $0.703$ & $190.012$ & $6.625$ \\
\bottomrule
\end{tabular}%
}
\end{subfigure}
\caption{Comparison between the standard dual formulation and the augmented dual formulation on the re-ranking task. All experiments use constraint tolerance $\epsilon=3$. For the augmented dual runs, we set the augmentation parameter to $\alpha=1$. \textit{Augmented Lagrangian consistently outperforms standard Lagrangian, and is more stable to the dual step size.}}
\label{fig:aug-dual-vs-dual}
\end{figure}

\subsection{Stability with respect to the augmentation parameter $\alpha$.} \label{apx:alpha_stab}
We also evaluate the sensitivity of the augmented dual formulation to the augmentation parameter $\alpha$. Table~\ref{tab:alpha-ablation} shows that performance remains stable across several orders of magnitude, from $\alpha=1$ to $\alpha=1000$. In particular, $\mathrm{MRR}@10$ varies only between $0.737$ and $0.743$, with mean $0.741$ and standard deviation $0.002$. Similarly, $\mathrm{Acc}@3$ remains essentially unchanged, while $\mathrm{LenRank}@10$ and the average top-3 length increase moderately as $\alpha$ grows. These results suggest that the augmented formulation is not highly sensitive to the precise choice of $\alpha$, although tuning $\alpha$ can still slightly improve the performance.

\begin{table}[h]
\caption{Augmentation-parameter ablation for the augmented dual formulation. Experiments were run on the re-ranking task using the pointwise method with constraint tolerance $\epsilon=3$ and dual step size $\eta=0.1$.}
\label{tab:alpha-ablation}
\centering
\small
\begin{tabular}{lrrrr}
\toprule
$\alpha$ 
& $\mathrm{Acc}@3$ 
& $\mathrm{MRR}@10$ 
& $\mathrm{LenRank}@10$ 
& Avg. Top-3 Length \\
\midrule
$1$    & $0.903$ & $0.743$ & $186.180$ & $67.025$ \\
$10$   & $0.894$ & $0.737$ & $192.073$ & $69.943$ \\
$100$  & $0.903$ & $0.742$ & $200.284$ & $73.085$ \\
$1000$ & $0.903$ & $0.742$ & $201.645$ & $73.323$ \\
\midrule
Mean $\pm$ Std. 
& $0.901 \pm 0.004$ 
& $0.741 \pm 0.002$ 
& $195.045 \pm 6.294$ 
& $70.844 \pm 2.577$ \\
\bottomrule
\end{tabular}
\end{table}

\subsection{Negligible runtime overhead.}

Table~\ref{tab:runtime-ablation} compares the time per epoch for the different methods for the re-ranking task. It shows that our methods do not introduce a noticeable runtime overhead compared with the standard training baselines. The time per epoch remains comparable across all methods, and the pointwise and relaxed variants are not slower in practice. This is expected, since the additional computation only requires accessing and updating the corresponding dual variable for each sample during training, which adds a negligible scalar operation relative to the cost of the model forward and backward passes.
\begin{table}[t]
\caption{Runtime comparison across methods for the reranking task. We report the mean and standard deviation of the time per training epoch, measured in seconds.}
\label{tab:runtime-ablation}
\centering
\small
\begin{tabular}{lr}
\toprule
Method & Time per epoch (s) \\
\midrule
MonoBERT & $942 \pm 003$ \\
avg & $939 \pm 036$ \\
pen & $931 \pm 030$ \\
point & $910 \pm 175$ \\
relaxed & $888 \pm 127$ \\
\bottomrule
\end{tabular}
\end{table}

\subsection{Negligible Memory Overhead} Although the pointwise method assigns one dual variable to each training sample, this does
not create a significant memory bottleneck. The multiplier vector
$\blambda\in\mathbb{R}^N_+$ need not be stored in GPU/TPU memory (VRAM), and can be accessed only for the
samples in the current minibatch. Therefore, the forward and backward passes
require only the minibatch entries of $\blambda$, not the full vector. Moreover,
the overhead is typically negligible relative to the dataset itself:
each sample multiplier is a single scalar, whereas each text example is represented by
a sequence of token indices, often together with attention masks, labels, and
other metadata.


\end{document}